\documentclass[smallextended]{svjour3}       
\smartqed  
\usepackage{graphicx}

\usepackage{latexsym}
\usepackage{amssymb,amsbsy,amsmath,amsfonts,amssymb,amscd,amsfonts,mathrsfs}
\usepackage{graphicx}
\usepackage{color}

\newcommand {\sgn} { {\rm sgn} }

\newcommand{\argmax}[1]{\underset{#1}{\operatorname{arg}\,\operatorname{max}}\;}
\newcommand{\argmin}[1]{\underset{#1}{\operatorname{arg}\,\operatorname{min}}\;}

\newcommand{\beq}{\begin{equation}}
\newcommand{\eeq}{\end{equation}}
\newcommand{\bea} {\begin{array}{rl}}
\newcommand{\eea} {\end{array}}
\newcommand{\bepa}{\left\{ \begin{array}{l}}
\newcommand{\eepa} {\end{array}\right.}

\usepackage{booktabs}

\usepackage{tabularx}

\newtheorem{ass}[theorem]{Assumption}

\begin{document}

\title{Evolutionary dynamics of competing phenotype-structured populations in periodically fluctuating environments
}

\titlerunning{Evolutionary dynamics in fluctuating environments}        

\author{Aleksandra Arda\v{s}eva  \and Robert A Gatenby \and Alexander R A Anderson \and Helen M Byrne \and Philip K Maini \and Tommaso Lorenzi 
}


\institute{
	   Aleksandra Arda\v{s}eva \at
	   Wolfson Centre for Mathematical Biology, Mathematical Institute, 
	   University of Oxford, Andrew Wiles Building, 
	   Radcliffe Observatory Quarter, Woodstock Road, Oxford, OX2 6GG, UK\\
            \email{aleksandra.ardaseva@maths.ox.ac.uk}
	     \and
	     Robert A Gatenby\at
	     Department of Integrated Mathematical Oncology, H. Lee Moffitt Cancer Center, Tampa, Florida, USA\\
	     \email{robert.gatenby@moffitt.org}
	     \and
	     Alexander R A Anderson\at
	     Department of Integrated Mathematical Oncology, H. Lee Moffitt Cancer Center, Tampa, Florida, USA\\
	     \email{alexander.anderson@moffitt.org}
	     \and
             Helen M Byrne \at
	   Wolfson Centre for Mathematical Biology, Mathematical Institute, 
	   University of Oxford, Andrew Wiles Building, 
	   Radcliffe Observatory Quarter, Woodstock Road, Oxford, OX2 6GG, UK\\
             \email{helen.byrne@maths.ox.ac.uk}  
             	     \and
             Philip K Maini \at
	   Wolfson Centre for Mathematical Biology, Mathematical Institute, 
	   University of Oxford, Andrew Wiles Building, 
	   Radcliffe Observatory Quarter, Woodstock Road, Oxford, OX2 6GG, UK\\
             \email{philip.maini@maths.ox.ac.uk}   
             \and	
	    Tommaso Lorenzi \at
	    School of Mathematics and Statistics, University of St Andrews, St Andrews, KY16 9SS, UK\\
            \email{tl47@st-andrews.ac.uk}         
}
\date{Received: date / Accepted: date}

\maketitle

\begin{abstract}
Living species, ranging from bacteria to animals, exist in environmental conditions that exhibit spatial and temporal heterogeneity which requires them to adapt. Risk-spreading through spontaneous phenotypic variations is a known concept in ecology, which is used to explain how species may survive when faced with the evolutionary risks associated with temporally varying environments. In order to support a deeper understanding of the adaptive role of spontaneous phenotypic variations in fluctuating environments, we consider a system of non-local partial differential equations modelling the evolutionary dynamics of two competing phenotype-structured populations in the presence of periodically oscillating nutrient levels. The two populations undergo heritable, spontaneous phenotypic variations at different rates. The phenotypic state of each individual is represented by a continuous variable, and the phenotypic landscape of the populations evolves in time due to variations in the nutrient level. Exploiting the analytical tractability of our model, we study the long-time behaviour of the solutions to obtain a detailed mathematical depiction of the evolutionary dynamics. The results suggest that when nutrient levels undergo small and slow oscillations, it is evolutionarily more convenient to rarely undergo spontaneous phenotypic variations. Conversely, under relatively large and fast periodic oscillations in the nutrient levels, which bring about alternating cycles of starvation and nutrient abundance, higher rates of spontaneous phenotypic variations confer a competitive advantage. We discuss the implications of our results in the context of cancer metabolism.

\keywords{Periodically fluctuating environments \and Evolutionary dynamics \and Spontaneous phenotypic variation \and Bet-hedging \and Non-local partial differential equations}
\end{abstract}

\section{Introduction}
\label{Introduction}
Organisms of various scales, ranging from bacteria to animals, exist in fluctuating environments. For example, in order to cope with changes in nutrient availability, they are required to adapt. When the fluctuations are regular and the populations have sufficient time to sense the changes and react, a highly plastic phenotype, which allows individuals in the population to acquire different traits based on environmental cues, is an optimal strategy~\cite{xue2018benefits}. An alternative strategy that is more suitable for dealing with irregular and unpredictable changes in the environment is risk spreading, which is also known as bet-hedging \cite{cohen1966optimizing}. Here, the population diversifies its phenotypes such that each sub-population is adapted to a specific environment. This ensures that at least some fraction of the population will survive in the face of sudden environmental changes \cite{philippi1989hedging}. Phenotypic heterogeneity, a characteristic feature of a risk spreading strategy, is observed in many systems, including bacterial populations \cite{kussell2005phenotypic} and solid tumours \cite{marusyk2012intra}. 

Bet-hedging is typically proposed to occur within the context of bacterial populations, where experimental support for stochastic phenotype switching is available \cite{kussell2005phenotypic,smits2006phenotypic,veening2008bistability,beaumont2009experimental,acar2008stochastic}. The classic example of bet-hedging is bacterial persistence. During antibiotic treatment a small fraction of slowly growing bacteria, that are resistant to the antibiotic, is able to survive. After the treatment is over, the original population is restored, resulting in resistance to the antibiotic \cite{balaban2004bacterial}. Schreiber et al. \cite{schreiber2016phenotypic} showed that fluctuations in nutrient levels alter the metabolism of bacteria and promote phenotypic heterogeneity. Risk spreading strategies have been observed in other organisms, such as fungi and slime moulds \cite{kussell2005phenotypic}. It is also hypothesized to be present in cancer where irregular vasculature can cause significant fluctuations within the tumour microenvironment \cite{gillies2018eco}. Experimental and theoretical work suggest that intermittent lack of oxygen, \emph{i.e.} cycling hypoxia, leads to clonal diversity, promotes metastasis and selects for more aggressive phenotypes \cite{cairns2001acute,robertson2015impact,chen2018intermittent,nichol2016stochasticity}. 

Mathematically, competition between populations evolving in fluctuating environments has been studied using different modelling approaches, including deterministic predator-prey models and stochastic models \cite{cushing1980two,chesson1994multispecies,roxburgh2004intermediate,anderies2000fluctuating}. Previous work suggests that the likelihood of species coexistence is increased by temporal variations in the environment. More recent models have looked at adaptive strategies, including stochastic phenotype switching, emerging in stochastic environments \cite{fudenberg2012phenotype,muller2013bet,ashcroft2014fixation,wienand2017evolution,gravenmier2018adaptation}. Muller et al. \cite{muller2013bet} theoretically investigated the environmental conditions that would lead to the emergence of bet-hedging, noting the importance of the fluctuation timescales on the success of the adaptation strategy. A rapidly fluctuating environment selects the phenotype that is adapted to averaged conditions, whereas in a slowly varying environment, having two distinct specialists is beneficial. The bet-hedging population was shown to be most successful in an environment that fluctuates on an intermediate timescale. 

Most of the experimental and theoretical models that have been developed to explore the dynamics of phenotypic changes in fluctuating environments consider the state of the environment and the phenotypic state of the individuals to be binary -- \emph{i.e.} the environment switches between two extreme conditions and individuals are allowed to jump between two antithetical phenotypic states that are each adapted to opposing environmental conditions~\cite{acar2008stochastic,muller2013bet,wienand2017evolution}. However, in many cases of biological and ecological interest {\it Natura non facit saltus}, and it might therefore be relevant to consider the occurrence of intermediate environmental conditions and the existence of a spectrum of possible phenotypic states. 

In light of these considerations, we present here a novel mathematical model for the evolutionary dynamics of two competing phenotype-structured populations in periodically fluctuating environments. The phenotypic state of each individual is represented by a continuous variable, and the phenotypic fitness landscape of the populations evolves in time due to variations in the concentration of a nutrient. In order to assess the evolutionary role that heritable, spontaneous phenotypic variations play in environmental adaptation, we focus on the case where the two populations undergo such phenotypic variations with different probabilities.

In our model, the phenotype distribution of the individuals within each population is described by a population density function that is governed by a parabolic partial differential equation (PDE), whereby a linear diffusion operator models the occurrence of spontaneous phenotypic variations, while a non-local reaction term takes into account the effects of asexual reproduction and intrapopulation competition. The two non-local parabolic PDEs for the population density functions are coupled through an additional non-local term modelling the effects of interpopulation competition. In such a mathematical framework, the fact that the two populations undergo phenotypic variations with different probabilities translates into the assumption that the two PDEs have different diffusion coefficients. 

Mathematical models formulated in terms of integrodifferential equations and non-local parabolic PDEs like those considered here have been increasingly used to achieve a more in-depth theoretical understanding of the mechanisms underlying phenotypic adaptation in a variety of biological contexts~\cite{alfaro2018evolutionary,alfaro2017effect,almeida2019evolution,bouin2014travelling,bouin2012invasion,busse2016mass,calsina2004small,calsina2007asymptotic,calsina2013asymptotics,chisholm2016evolutionary,chisholm2015emergence,cuadrado2009equilibria,delitala2012mathematical,delitala2012asymptotic,delitala2013mathematical,desvillettes2008selection,diekmann2005dynamics,domschke2017structured,iglesias2018long,lam2017dirac,lam2017stability,lorenzi2014asymptotic,lorenzi2016tracking,lorenzi2015dissecting,lorenzi2015mathematical,lorenzi2018role,lorz2015modeling,lorz2013populational,mirrahimi2015time,Nordmann2018,perthame2006transport,pouchol2018asymptotic,turanova2015model}. In particular, our work follows earlier papers on non-local parabolic PDEs modelling the evolutionary dynamics of populations structured by continuous traits in periodically-fluctuating environments~\cite{iglesias2018long,lorenzi2015dissecting}. Compared to these previous studies, which considered scalar equations modelling the dynamics of single population, our model comprises a system of coupled equations modelling the dynamics of competing populations. This requires a novel extension of the methods developed in~\cite{lorenzi2015dissecting} to characterise the qualitative and quantitative properties of the solutions.

Exploiting the analytical tractability of our model, we study the long-time behaviour of the solutions in order to obtain a detailed mathematical depiction of the evolutionary dynamics. Moreover, the asymptotic results are compared to numerical solutions of the model equations. Our analytical and numerical results clarify the role of heritable, spontaneous phenotypic variations as drivers of adaptation in periodically fluctuating environments. 

The paper is organised as follows. In Section~\ref{Model} we introduce the mathematical model. In Section~\ref{Section3} we carry out an analytical study of evolutionary dynamics. In Section~\ref{Section4} we integrate the analytical results with numerical simulations. In Section~\ref{Section5} we discuss the biological relevance of our theoretical findings in the context of cancer cell metabolism and tumour-microenvironment interactions. Section~\ref{Conclusions} concludes the paper and provides a brief overview of possible research perspectives.

\section{Model description}
\label{Model}
We study the evolutionary dynamics of two competing phenotype-structured populations in a well-mixed system. Individuals within the two populations reproduce asexually, die and undergo heritable, spontaneous phenotypic variations. We assume the two populations  differ only in the rate at which they undergo phenotypic variations. We label the population undergoing phenotypic variations at a higher rate by the letter $H$, while the other population is labelled by the letter $L$.

We represent the phenotypic state of each individual by a continuous variable $x \in \mathbb{R}$, and we describe the phenotype distributions of the two populations at time $t \in [0,\infty)$ by means of the population density functions $n_H(x,t)~\geq~0$ and $n_L(x,t)~\geq~0$. We define the size of population $H$, the size of population $L$ and the total number of individuals inside the system at time $t$, respectively, as
\beq
\label{erhos}
\rho_{H}(t) = \int_{\mathbb{R}} n_{H}(x,t) \; {\rm d}x, \quad \rho_{L}(t) = \int_{\mathbb{R}} n_{L}(x,t) \; {\rm d}x ,\quad \rho(t) = \rho_H(t) + \rho_L(t).
\eeq
Moreover, we define, respectively, the mean phenotypic state and the related variance of each population $i \in \{H,L\}$ at time $t$ as
\beq
\label{emusigma}
\mu_{i}(t) = \frac{1}{\rho_i(t)} \int_{\mathbb{R}} x \, n_{i}(x,t) \; {\rm d}x, \quad \sigma^2_{i}(t)  = \frac{1}{\rho_i(t)} \int_{\mathbb{R}} x^2 \; n_i(x,t) \; {\rm d}x -  \mu_i^2(t).
\eeq
In the mathematical framework of our model, the function $\sigma^2_{i}(t)$ provides a measure of the level of phenotypic heterogeneity in the $i^{th}$ population. Finally, we introduce a function $S(t) \geq 0$ to model the concentration of a nutrient that is equally available to the two populations at time $t$, which we assume is given.

The evolution of the population density functions is governed by the following system of non-local parabolic PDEs
\begin{equation}
\label{eS1}
\left\{
\begin{array}{ll}
\displaystyle{\frac{\partial n_H}{\partial t} = \beta_H \frac{\partial^2 n_H}{\partial x^2} \; + \; R\big(x,S(t),\rho(t)\big) \, n_H}, 
\\\\
\displaystyle{\frac{\partial n_L}{\partial t} = \beta_L \frac{\partial^2 n_L}{\partial x^2} \;\;\; + \; R\big(x,S(t),\rho(t)\big) \, n_L},
\end{array}
\right.
\quad  \text{for } (x,t) \in \mathbb{R} \times (0,\infty).
\end{equation} 

In the system of PDEs \eqref{eS1}, the diffusion terms model the effects of spontaneous phenotypic variations, which occur at rates $\beta_H >0$ and $\beta_L > 0$, with 
\begin{equation}
\label{assbeta}
\beta_H > \beta_L.
\end{equation} 
The functional $R\big(x,S(t),\rho(t)\big)$ models the fitness of individuals in the phenotypic state $x$ at time $t$ under the environmental conditions given by the nutrient concentration $S(t)$ and the total number of individuals $\rho(t)$ -- \emph{i.e.} the functional $R\big(x,S(t),\rho(t)\big)$ can be seen as the phenotypic fitness landscape of the two populations at time $t$. Throughout the paper, we define this fitness functional as
\begin{equation}\label{R}
R\big(x,S,\rho\big) = p(x,S) \; - \; d  \rho.
\end{equation}
Definition~\eqref{R} translates into mathematical terms the following biological ideas: (i) all else being equal, individuals die due to nterpopulation and intrapopulation competition at rate $d \rho(t)$, with the parameter $d > 0$ being related to the carrying capacity of the system in which the two populations are contained; (ii) individuals in the phenotypic state $x$ proliferate and die under natural selection at rate $p(x,S(t))$ (\emph{i.e.} the function $p(x,S)$ is a net proliferation rate). We focus on a scenario corresponding to the biological assumptions given hereafter.

\begin{ass}
\label{bioass1}
Phenotypic variants with $x \to 0$ have a competitive advantage over the other phenotypic variants when the nutrient concentration is high (i.e. if $S(t) \gg 1$). 
\end{ass}
\begin{ass}
\label{bioass2}
Phenotypic variants with $x \to 1$ are favoured over the other phenotypic variants when the nutrient concentration is low (i.e. if $S(t) \ll 1$). 
\end{ass}
Under Assumptions \ref{bioass1} and \ref{bioass2}, we define the net proliferation rate as
\begin{equation}\label{pn}
p\big(x,S(t)\big) = \gamma \, \frac{S(t)}{1 + S(t)} \, \left(1 - x^2 \right) \; + \; \zeta \, \left(1 - \frac{S(t)}{1 + S(t)}\right) \left[1 - \left(1-x\right)^2 \right],
\end{equation}
with $0 < \zeta \leq \gamma$. The parameters $\gamma$ and $\zeta$ model, respectively, the maximum proliferation rate of the phenotypic variants best adapted to nutrient-rich and nutrient-scarce environments.

Definition~\eqref{pn} ensures analytical tractability of the model and leads to a fitness functional that is close to the approximate fitness landscapes which can be inferred from experimental data through regression techniques -- see, for instance, equation (1) in~\cite{otwinowski2014inferring}. In fact, after a little algebra, definition~\eqref{pn} can be rewritten as
\begin{equation}\label{pn2ori}
p\big(x,S\big) = \gamma \, g(S) - h(S) \left(x - \varphi(S) \right)^2
\end{equation}
with 
\begin{equation}\label{ghori}
g(S) = \frac{1}{1 + S} \left(S + \frac{\zeta}{\gamma} \frac{\zeta}{\zeta + \gamma S}\right), \quad \varphi(S) = \frac{\zeta}{\zeta + \gamma \, S}
\end{equation}
and
\begin{equation}\label{hori}
h(S)= \zeta + (\gamma - \zeta) \frac{S}{1+S}.
\end{equation}
Under the environmental conditions defined by the nutrient concentration $S$, the function $0 \leq \varphi(S) \leq 1$ represents the fittest phenotypic state, $\gamma \, g(S) > 0$ is the maximum fitness, and $h(S)$ can be seen as a nonlinear selection gradient that quantifies the intensity of natural selection. Throughout the paper we will refer to $g(S)$ as the rescaled maximum fitness. 

In accordance with Assumptions \ref{bioass1} and \ref{bioass2}, equation~\eqref{pn2ori} shows that definition~\eqref{pn} is such that the fittest phenotypic state $\varphi(S)$ belongs to the interval $[0,1]$ for any nutrient concentration $S\geq 0$, \emph{i.e.} $\varphi : \mathbb{R}_{\geq 0} \to [0,1]$. In particular, under starvation conditions (\emph{i.e.} if $S = 0$) the fittest phenotypic state is $\varphi(0)=1$, while increasing nutrient concentrations correspond to values of the fittest phenotypic state closer to $0$, {\it i.e.} $\varphi'(S)<0$ for all $S \geq 0$ and $\varphi(S) \to 0$ as $S \to \infty$. Furthermore, the fact that the function $p(x,S)$ is negative for values of $x$ sufficiently far from the fittest phenotypic state $\varphi(S)$ captures the idea that less fit variants are driven to extinction by natural selection. These observations are illustrated by the plots in Figure~\ref{fig1}. 
\begin{figure}[h] \centering
\includegraphics[width=0.9 \linewidth]{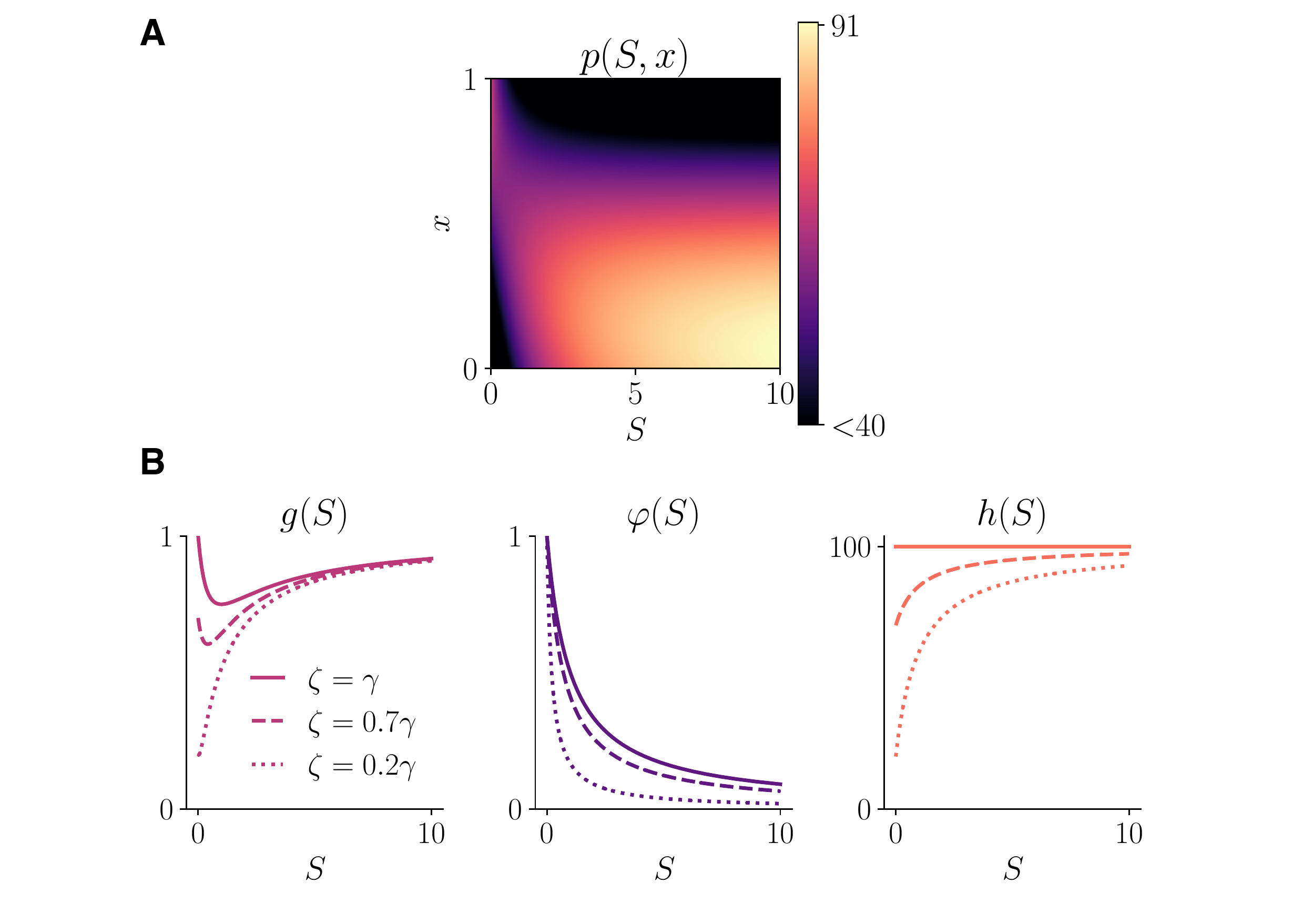}
\caption{\label{fig1} {\bf {\textsf A}.} Plot of the net proliferation rate $p(S,x)$ defined by~\eqref{pn} [or equivalently by \eqref{pn2ori}] with $\gamma = 100$ and $\zeta = 50$. {\bf {\textsf B}.} The rescaled maximum fitness $g(S)$ and the fittest phenotypic state $\varphi(S)$ defined by~\eqref{ghori}, along with the selection gradient $h(S)$ defined by~\eqref{hori}, are plotted against the nutrient concentration $S$, for $\gamma = 100$ and different values of the parameter $\zeta$. In this paper we consider the case $\zeta=\gamma$}
\end{figure}

Henceforth for simplicity we assume
\beq
\label{asimp}
\zeta = \gamma.
\eeq
Under assumption~\eqref{asimp}, definitions~\eqref{ghori} and~\eqref{hori} become, respectively,
\beq
\label{gh}
g(S) = \frac{1}{1 + S} \left(S + \frac{1}{1 + S}\right), \quad \varphi(S) = \frac{1}{1 + S} \quad \text{and} \quad h(S) \equiv \gamma.
\eeq
Moreover, since we assume the function $S(t)$ to be given, we use the notation
$$
g(t)\equiv g(S(t)) \quad \text{and} \quad  \varphi(t)\equiv \varphi(S(t)).
$$

\section{Analysis of evolutionary dynamics}
\label{Section3}
To obtain an analytical description of the evolutionary dynamics, we focus on a biological scenario whereby the initial phenotype distributions of the two populations are Gaussians, that is, we study the behaviour of the solution to the system of non-local parabolic equations~\eqref{eS1} subject to the initial condition given by the pair $n_H(x,0)$ and $n_L(x,0)$ with
\begin{equation}
\label{eS1S0ic}
n_i(x,0) = \rho_i^0 \sqrt{\frac{v^0_i}{2 \pi}} \, \exp\left[-\frac{v^0_i}{2} \left(x - \mu^0_i \right)^2\right] \quad \text{for} \quad  i \in \{H,L\},
\end{equation} 
where $\rho^0_i \in \mathbb{R}_{>0}$, $v^0_i \in \mathbb{R}_{>0}$ and $\mu^0_i \in \mathbb{R}$. 
\begin{remark}
The choice of initial condition~\eqref{eS1S0ic} is consistent with much of the previous work on the mathematical analysis of the evolutionary dynamics of continuous traits, which relies on the prima facie assumption that population densities are Gaussians~\cite{rice2004evolutionary}. 
\end{remark}

Before turning to the case of periodically fluctuating environments in Section~\ref{Section32}, we consider the case of constant environments in Section~\ref{Section31}. The proofs of the results presented in Section~\ref{Section31} and Section~\ref{Section32} rely on the results established by the Proposition~\ref{Prop1}.

\begin{proposition}
\label{Prop1}
Under assumptions~\eqref{R}, \eqref{pn2ori} and \eqref{gh}, the system of non-local PDEs~\eqref{eS1} subject to the initial condition~\eqref{eS1S0ic} admits the exact solution
\begin{equation} \label{gaussian}
n_i(x,t)= \rho_i(t) \sqrt{\frac{v_i(t)}{2 \pi}} \exp\left[-\frac{v_i(t)}{2} (x-\mu_i(t))^2  \right] \quad \text{for} \quad i \in \left\{H, L \right\},
\end{equation}
with the population size, $\rho_i(t)$, the mean phenotypic state, $\mu_i(t)$, and the inverse of the related variance, $v_i(t) = 1/\sigma^{2}_i(t)$, being solutions of the Cauchy problem
\begin{equation}
\label{odes}
\left\{
\begin{array}{ll}
\displaystyle{v_i'(t) = 2 \left(\gamma - \beta_i v_i^2(t)\right)},
\\\\
\displaystyle{\mu_i'(t) = \frac{2\gamma}{v_i(t)} \left(\varphi(t) - \mu_i(t)\right)}, 
\\\\
\displaystyle{\rho_i'(t) = \left(F_i(t) - d \rho(t) \right)\rho_i(t)},
\\\\
v_i(0) = v^0_i, \quad \mu_i(0) = \mu^0_i, \quad \rho_i(0) = \rho^0_i, 
\\\\
\rho(t) = \rho_H(t) + \rho_L(t),
\end{array}
\right.
\qquad 
\text{for } \; i \in \left\{H, L \right\},
\end{equation} 
where
\begin{equation} \label{Fit}
F_i(t) \equiv F_i(t,v_i(t),\mu_i(t)) =  \gamma \, g(t) - \frac{\gamma}{v_i(t)} - \gamma \left( \mu_i(t) - \varphi(t) \right)^2.
\end{equation}
\end{proposition}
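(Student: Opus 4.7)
The plan is to verify the claim by a direct Gaussian ansatz calculation: plug the proposed density \eqref{gaussian} into each PDE of \eqref{eS1}, use the quadratic form of $p(x,S)$ guaranteed by \eqref{pn2ori} and \eqref{gh}, and match coefficients of the resulting polynomial in $x-\mu_i(t)$. Because $h(S)\equiv \gamma$ under assumption \eqref{asimp}, the fitness functional reduces to
\begin{equation*}
R(x,S(t),\rho(t)) \;=\; \gamma\, g(t) \;-\; \gamma\bigl(x-\varphi(t)\bigr)^2 \;-\; d\,\rho(t),
\end{equation*}
which is a quadratic in $x$; once both sides of the PDE are divided by $n_i$ they become polynomials in $x-\mu_i(t)$ of degree at most two, so coefficient matching is finite and exhaustive.

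Concretely, for each $i\in\{H,L\}$ I would compute $\partial_t n_i / n_i$ (which produces the term $\rho_i'/\rho_i + v_i'/(2v_i) - \tfrac{v_i'}{2}(x-\mu_i)^2 + v_i\,\mu_i'\,(x-\mu_i)$) and $\partial_{xx} n_i / n_i = v_i^2(x-\mu_i)^2 - v_i$, then expand $(x-\varphi)^2 = (x-\mu_i)^2 + 2(\mu_i-\varphi)(x-\mu_i) + (\mu_i-\varphi)^2$ in $R$. Matching the $(x-\mu_i)^2$ coefficients yields $v_i' = 2(\gamma - \beta_i v_i^2)$; matching the $(x-\mu_i)^1$ coefficients yields $\mu_i' = (2\gamma/v_i)(\varphi(t) - \mu_i)$; and matching the constant term, after substituting $v_i'/(2v_i) = \gamma/v_i - \beta_i v_i$, collapses to $\rho_i'/\rho_i = F_i(t) - d\,\rho(t)$ with $F_i$ as in \eqref{Fit}. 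The initial data \eqref{eS1S0ic} force $v_i(0)=v^0_i$, $\mu_i(0)=\mu^0_i$, $\rho_i(0)=\rho^0_i$, and these Gaussian parameters are recovered from $n_i(x,0)$ by computing the moments \eqref{erhos}–\eqref{emusigma}.

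To close the argument I would invoke standard Cauchy--Lipschitz theory to ensure that the ODE system \eqref{odes} has a unique solution on $[0,\infty)$: the right-hand sides are locally Lipschitz in $(v_H,v_L,\mu_H,\mu_L,\rho_H,\rho_L)$, the equation $v_i' = 2(\gamma - \beta_i v_i^2)$ admits the globally attracting positive equilibrium $\sqrt{\gamma/\beta_i}$ and keeps $v_i$ bounded away from $0$ and $\infty$ for $v^0_i>0$, and the logistic-type coupling $\rho_i' = (F_i - d\rho)\rho_i$ prevents blow-up of $\rho$ in finite time since $F_i$ is uniformly bounded above. The Gaussian built from this unique solution then satisfies the PDE system and the initial condition, and by uniqueness of solutions to the linear-in-$n_i$ parabolic system with locally bounded coefficients $R(x,S(t),\rho(t))$ it is the solution.

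The main obstacle, and really the only nontrivial bookkeeping step, is the cancellation in the constant-coefficient equation: the $\beta_i v_i$ term from $\partial_{xx}$ must cancel against a contribution arising through $v_i'/(2v_i)$, and one needs to recognise that the residual $-\gamma/v_i$ combines with $\gamma\,g(t) - \gamma(\mu_i-\varphi)^2 - d\rho$ to reproduce exactly the definition of $F_i$ in \eqref{Fit}. Everything else is routine calculus of Gaussians, and the fact that \eqref{gh} yields $h(S)\equiv\gamma$ (so the selection gradient is $x$-independent as a coefficient of $(x-\varphi)^2$) is what makes the Gaussian ansatz closed — any $S$-dependence in $h$ would still close, but the constant case keeps the $v_i$ equation decoupled from $\mu_i$ and $\rho_i$, simplifying the subsequent asymptotic analysis.
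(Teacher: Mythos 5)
Your proposal is correct and follows essentially the same route as the paper's proof: substitute the Gaussian ansatz into the PDE (which is quadratic in $x$ because $h(S)\equiv\gamma$), match coefficients of the second-, first- and zero-order terms to obtain the three ODEs of \eqref{odes}, and read off the initial data from \eqref{eS1S0ic}; the only cosmetic difference is that you organise the polynomial in powers of $x-\mu_i(t)$ rather than $x$. The additional remarks on global well-posedness of the ODE system and uniqueness for the parabolic problem are sound and go slightly beyond what the paper records, but they do not change the argument.
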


\begin{proof}
Substituting definitions~\eqref{R}, \eqref{pn2ori} and \eqref{gh} into the non-local PDE \eqref{eS1} for $n_i(x,t)$ yields
\begin{equation}
\label{e1-multi}
\displaystyle{\frac{\partial n_i}{\partial t} = \beta_i \frac{\partial^2 n_i}{\partial x^2} \; + \; \left[\gamma g(t) - \gamma (x - \varphi(t))^2 - d \rho(t) \right] n_i}, \quad n_i \equiv n_i(x,t).
\end{equation} 
Building upon the results presented in~\cite{almeida2019evolution,chisholm2016evolutionary,lorenzi2015dissecting}, we make the ansatz~\eqref{gaussian} and substituting this ansatz into equation~\eqref{e1-multi} we find
\begin{eqnarray}\label{eAnsatzTest}
\frac{\rho_i'}{\rho_i}+\frac{v_i'}{2v_i} &=& \frac{v_i'}{2}\left(x-\mu_i\right)^2 - \mu_i' \, v_i \left(x-\mu_i\right) + \,  \beta_i\left[v_i^2 \left(x-\mu_i\right)^2 -v_i \right] \nonumber \\
&& + \,  \gamma \, g(t)- \gamma \, \left(x-\varphi(t)\right)^2 - d \rho.
\end{eqnarray} 
Equating the coefficients of the zero-order, first-order and second-order terms in $x$ in \eqref{eAnsatzTest} produces a system of differential equations. Namely, the second-order terms in $x$ yield the following differential equation for $v_i$ alone
\beq
\label{odevi}
v_i' + 2 \beta_i v_i^2= 2 \gamma.
\eeq
Moreover, equating the coefficients of the first-order terms in $x$, and eliminating $v'_i$ from the resulting equation, yields
\beq
\label{odemui}
\mu_i' = \frac{2 \gamma (\varphi - \mu_i)}{v_i}.
\eeq
Lastly, choosing $x=\mu_i$ in equation~\eqref{eAnsatzTest} gives 
\begin{equation}\label{eAnzatzOneDE}
\frac{\rho_i'}{\rho_i}+\frac{v_i'}{2v_i}=-\beta_i v_i + \gamma \, g - \gamma (\mu_i - \varphi)^2 - d \rho
\end{equation}
and eliminating $v'_i$ from equation~\eqref{eAnzatzOneDE} we find
\begin{equation} 
\label{oderhoi}
\rho_i' = \left( F_i - d \rho \right) \rho_i,
\end{equation}
with the function $F_i(t)$ being defined according to~\eqref{Fit}. Under the initial condition~\eqref{eS1S0ic}, we have
$$
v_i(0)=v^0_i, \quad \mu_i(0)=\mu^0_i \quad \text{and} \quad \rho_i(0)=\rho^0_i.
$$
Imposing these initial conditions on the system of differential equations~\eqref{odevi}-\eqref{oderhoi}, we arrive at the Cauchy problem \eqref{odes} for the functions $v_i(t)$, $\mu_i(t)$ and $\rho_i(t)$.
\qed
\end{proof}

\subsection{Evolutionary dynamics in constant environments}
\label{Section31}
Focussing on the case of constant environments, we let the nutrient concentration be constant and thus we make the assumption 
\beq
\label{aS0}
S(t) \equiv \overline{S} \geq 0,
\eeq
which implies that 
\beq
\label{gh0}
g(t) \equiv \overline{g} \quad \text{and} \quad \varphi(t) \equiv \overline{\varphi}.
\eeq
In this case, our main results are summarised by Theorem~\ref{theorem1}.
\begin{theorem} \label{theorem1}
Under assumptions~\eqref{assbeta}, \eqref{R}, \eqref{pn2ori}, \eqref{gh} and the additional assumption \eqref{aS0}, the solution of the system of PDEs \eqref{eS1} subject to the initial condition~\eqref{eS1S0ic} is of the Gaussian form \eqref{gaussian} and satisfies the following:
\begin{itemize}
\item[(i)] if 
\beq
\label{Th1assi}
\sqrt{\beta_L} \geq \sqrt{\gamma} \, \overline{g}
\eeq
then 
\beq
\label{Th1i}
\lim_{t\to\infty}\rho_H(t)=0 \quad \text{and} \quad \lim_{t\to\infty}\rho_L(t)=0;
\eeq
\item[(ii)] if 
\beq
\label{Th1assii}
\sqrt{\beta_L} < \sqrt{\gamma} \, \overline{g}
\eeq
then 
\beq
\label{Th1ii}
\lim_{t\to\infty}\rho_H(t)=0, \quad \lim_{t\to\infty}\rho_L(t)= \frac{\sqrt{\gamma}}{d} \left( \sqrt{\gamma} \, \overline{g} -  \sqrt{\beta_L} \right)
\eeq
and
\begin{equation}\label{Th1iii}
\lim_{t\to\infty}\mu_L(t) = \overline{\varphi}, \quad \lim_{t\to\infty} \sigma^2_L(t) = \sqrt{\frac{\beta_L}{\gamma}}.
\end{equation}
\end{itemize}
\end{theorem}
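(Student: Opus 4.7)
My plan is to apply Proposition~\ref{Prop1}, which reduces the study of~\eqref{eS1}--\eqref{eS1S0ic} to the Cauchy problem~\eqref{odes}, and then to analyse this ODE system in cascade: first $v_i$, then $\mu_i$, then $F_i$, and finally $(\rho_H,\rho_L)$. Under~\eqref{aS0} each $v_i$ solves the autonomous scalar Riccati equation $v_i'=2(\gamma-\beta_i v_i^2)$, whose unique positive equilibrium $v_i^{*}:=\sqrt{\gamma/\beta_i}$ is globally attracting on $(0,\infty)$ by a direct phase-line analysis; this immediately yields $\sigma^2_L(t)=1/v_L(t)\to \sqrt{\beta_L/\gamma}$, the second limit in~\eqref{Th1iii}. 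Once the convergence of $v_i$ is in hand, the scalar linear ODE $\mu_i'=2\gamma\, v_i(t)^{-1}(\overline{\varphi}-\mu_i)$ has an asymptotically constant, strictly positive coefficient whose integral diverges, so a standard integrating-factor argument gives $\mu_i(t)\to\overline{\varphi}$, the first limit in~\eqref{Th1iii}. Substituting into~\eqref{Fit} produces
\[
F_i(t) \;\longrightarrow\; F_i^{\infty} \;:=\; \gamma\,\overline{g}-\sqrt{\gamma\beta_i}, \qquad i\in\{H,L\},
\]
and assumption~\eqref{assbeta} gives $F_H^{\infty}<F_L^{\infty}$ strictly.

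For part~(i), condition~\eqref{Th1assi} is exactly $F_L^{\infty}\le 0$, so $F_i^{\infty}\le 0$ for both $i$. Fixing $\e>0$, for $t$ large enough $F_i(t)\le\e$, hence $\rho_i'\le (\e-d\rho_i)\rho_i$; comparison with the associated logistic equation gives $\limsup_{t\to\infty}\rho_i(t)\le\e/d$, and letting $\e\to 0^{+}$ yields~\eqref{Th1i} (the strictly negative case is even easier, via direct exponential decay). For part~(ii), where $F_L^{\infty}>0$, the decisive object is the ratio $r(t):=\rho_H(t)/\rho_L(t)$. The common competition term $d\rho$ cancels in the logarithmic derivative, yielding $r'/r=F_H(t)-F_L(t)\to F_H^{\infty}-F_L^{\infty}=-\sqrt{\gamma}(\sqrt{\beta_H}-\sqrt{\beta_L})<0$; integrating gives $\ln r(t)\to -\infty$, so $\rho_H(t)=o(\rho_L(t))$. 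The equation for $\rho_L$ then reads $\rho_L'=(F_L^{\infty}+o(1)-d\rho_L)\rho_L$, an asymptotically autonomous logistic ODE whose unique positive equilibrium is $F_L^{\infty}/d$; standard sub/supersolution arguments (or an asymptotically-autonomous ODE framework) produce $\rho_L(t)\to F_L^{\infty}/d=\sqrt{\gamma}(\sqrt{\gamma}\,\overline{g}-\sqrt{\beta_L})/d$ and, a fortiori, $\rho_H(t)\to 0$, matching~\eqref{Th1ii}.

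The main technical obstacle lies in part~(ii): when $F_H^{\infty}>0$, population $H$ would proliferate in isolation and its extinction is driven purely by competition with $L$ through the shared term $d\rho(t)$. The ratio argument bypasses a direct analysis of the full competitive system but requires that $\rho_L$ stay bounded away from zero long enough for the fitness gap to manifest. I would handle this by first showing that $\rho=\rho_H+\rho_L$ is a priori bounded above (via a logistic comparison using $\max_i F_i^{\infty}$), and then establishing $\liminf_{t\to\infty}\rho_L(t)>0$ through a strictly positive subsolution: for $t$ large the $\rho_L$ equation is bounded below by $(F_L^{\infty}/2-d\rho_L-d\rho_H)\rho_L$, and once $\rho_H$ has been shown to be small this traps $\rho_L$ inside a positively invariant interval around $F_L^{\infty}/(2d)$. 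This closes the argument and delivers the quantitative limits in~\eqref{Th1ii}.
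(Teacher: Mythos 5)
Your proposal is correct and follows essentially the same route as the paper: reduction via Proposition~\ref{Prop1}, cascade analysis of $v_i$, $\mu_i$, $F_i$, the ratio argument $(\ln(\rho_H/\rho_L))'=F_H-F_L$ to cancel the shared competition term, and an asymptotically logistic equation for the survivor. The only differences are in execution---you use phase-line and sub/supersolution comparison arguments (and explicitly secure $\liminf\rho_L>0$) where the paper instead writes down explicit solution formulas for $v_i$, $\rho_i$ and the perturbed logistic equation---and both versions are sound.
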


\begin{proof}
Under the additional assumption \eqref{aS0}, Proposition~\ref{Prop1} ensures that the population density function $n_i(x,t)$ is of the Gaussian form \eqref{gaussian} with the population size, $\rho_i(t)$, the mean phenotypic state, $\mu_i(t)$, and the inverse of the related variance, $v_i(t)=1/\sigma_i^2(t)$, being governed by the Cauchy problem~\eqref{odes} with $g(t) \equiv \overline{g}$ and $\varphi(t) \equiv \overline{\varphi}$.

In this framework, we divide the proof of Theorem~\ref{theorem1} into four steps. We study the asymptotic behaviour of $v_i(t)$, $\mu_i(t)$ and $F_i(t)$ for $t \to \infty$ (Step~1). We show that $\rho_i(t)$ is non-negative and uniformly bounded (Step~2). Finally, we prove claim~\eqref{Th1i} (Step~3), and we conclude with the proof of claims~\eqref{Th1ii} and \eqref{Th1iii} (Step~4).

\paragraph{Step 1: asymptotic behaviour of $v_i(t)$, $\mu_i(t)$ and $F_i(t)$ for $t \to \infty$.} Solving the separable first-order differential equation~\eqref{odes}$_1$ for $v_i(t)$ and imposing the initial condition~\eqref{odes}$_4$ gives
\begin{equation}
\label{vi}
v_i(t) = \sqrt{\frac{\gamma}{\beta_i}} \, \frac{\sqrt{\gamma/\beta_i}+v^0_i - \left(\sqrt{\gamma/\beta_i} - v^0_i\right)\exp\left(-4\sqrt{\gamma \, \beta_i} \, t\right)}
{\sqrt{\gamma/\beta_i}+v^0_i + \left(\sqrt{\gamma/\beta_i} - v^0_i\right)\exp\left(-4\sqrt{\gamma \beta_i} \, t\right)},
\end{equation}  
which implies that
\begin{equation}
\label{viinf}
v_i(t) \to \sqrt{\frac{\gamma}{\beta_i}} \quad \text{exponentially fast} \text{ as } t \to \infty.
\end{equation} 
Moreover, solving the differential equation~\eqref{odes}$_2$ for $\mu_i(t)$ by the integrating factor method and imposing the initial condition~\eqref{odes}$_4$ yields 
\begin{equation}
\label{mui}
\mu_i(t) = \mu^0_i \exp\left(-2\gamma\int_0^t \frac{{\rm d}s}{v_i(s)}\right) + \overline{\varphi} \left[1 - \exp\left(-2\gamma\int_0^t \frac{{\rm d}s}{v_i(s)}\right) \right] ,
\end{equation} 
from which, using the positivity of $v_i(t)$, we find that
\begin{equation}
\label{muiinf}
\mu_i(t)  \to \overline{\varphi} \quad \text{exponentially fast} \text{ as } t \to \infty.
\end{equation} 
Lastly, noting that, under the additional assumption \eqref{aS0}, the function $F_i(t)$ defined by~\eqref{Fit} reads as
 \beq
\label{FitS0}
F_i(t) = \gamma \, \overline{g} - \frac{\gamma}{v_i(t)} - \gamma \left(\mu_i(t) - \overline{\varphi} \right)^2,
\eeq
the asymptotic results~\eqref{viinf} and \eqref{muiinf} allow us to conclude that 
\begin{equation}
\label{Fiinf}
F_i(t)  \to \gamma \, \overline{g} - \sqrt{\gamma \, \beta_i} \quad \text{exponentially fast} \text{ as } t \to \infty.
\end{equation}

\paragraph{Step 2: non-negativity and boundedness of $\rho_i(t)$.} Solving the differential equation~\eqref{odes}$_3$ for $\rho_i$ and imposing the initial condition~\eqref{odes}$_4$ yields
\begin{equation}
\label{rhoi}
\rho_i(t) = \rho^0_i \exp\left[\int_0^t \left(F_i(s) - d \rho(s) \right) {\rm d}s\right].
\end{equation}
This result, along with the positivity of $\rho^0_i$, implies that
\beq
\label{LBi}
\rho_i(t) \geq 0 \quad \text{for all } t \geq 0.
\eeq
Moreover, substituting~\eqref{FitS0} into the differential equation~\eqref{odes}$_3$ for $\rho_i$ yields
$$
\rho'_i(t) = \left[\gamma \, \overline{g} - \frac{\gamma}{v_i(t)} - \gamma \left ( \mu_i(t) - \overline{\varphi} \right)^2 \right] \rho_i(t) - d \left(\rho_i(t) + \rho_j(t)\right) \, \rho_i(t),
$$
with $j=L$ if $i=H$ and $j=H$ if $i=L$. Estimating from above the right-hand side of the latter differential equation by using the non-negativity of $\rho_j(t)$ [\emph{cf.} the uniform lower bound~\eqref{LBi}], the positivity of $v_i(t)$ [\emph{cf.} expression~\eqref{vi}] and the fact that $\overline{g} < 2$ [\emph{cf.} definition~\eqref{gh}], we obtain the differential inequality
$$
\rho'_i(t) \leq \left(2 \gamma - d \, \rho_i(t) \right) \rho_i(t),
$$
which gives the uniform upper bound
\beq
\label{UBiS0}
\rho_i(t) \leq \max \left\{\rho_i^0, \frac{2 \, \gamma}{d} \right\} \quad \text{for all } t \geq 0.
\eeq

\paragraph{Step 3: proof of claim~\eqref{Th1i}.} Combining the asymptotic result \eqref{Fiinf} with the expression~\eqref{rhoi} for $\rho_i$ we find that
\begin{equation}
\label{rhoiinf}
\rho_i(t) \sim C \rho^0_i \exp\left[\left(\gamma \overline{g} - \sqrt{\gamma \, \beta_i}\right) t - d \int_0^t \rho(s) \, {\rm d}s \right] \quad \text{as } \; t \to \infty,
\end{equation}
for some positive constant $C$. Since the function $\rho(t)$ is non-negative [\emph{cf.} the uniform lower bound~\eqref{LBi}], the asymptotic relation~\eqref{rhoiinf} ensures that
\beq
\label{extinction}
\text{if} \quad \sqrt{\beta_i} \geq \sqrt{\gamma} \, \overline{g} \quad \text{then} \quad \lim_{t\to\infty} \rho_i(t) = 0.
\eeq
Under assumption~\eqref{assbeta} and the additional assumption~\eqref{Th1assi}, claim~\eqref{Th1i} follows from the asymptotic result~\eqref{extinction}.

\paragraph{Step 4: proof of claims~\eqref{Th1ii} and \eqref{Th1iii}.} As long as $\rho_H(t)>0$, we can compute the quotient of $\rho_L(t)$ and $\rho_H(t)$ through~\eqref{rhoi}. In so doing we find
\begin{equation}
\label{quotrhoi}
\frac{\rho_L(t)}{\rho_H(t)} = \frac{\rho^0_L}{\rho^0_H} \exp\left[\int_0^t \left(F_L(s) - F_H(s) \right) {\rm d}s\right].
\end{equation}
Using the limit~\eqref{Fiinf} for $F_i$, we then have
\begin{equation}
\label{quotrhoiinf}
\frac{\rho_L(t)}{\rho_H(t)} \sim C \exp\left[\sqrt{\gamma} \, \left(\sqrt{\beta_H} - \sqrt{\beta_L}\right) t \right] \quad \text{as } \; t \to \infty,
\end{equation}
for some positive constant $C$. Under assumption~\eqref{assbeta}, the asymptotic relation~\eqref{quotrhoiinf} gives
$$
\lim_{t \to \infty} \frac{\rho_L(t)}{\rho_H(t)} = \infty
$$
and, since $\rho_L$ is uniformly bounded from above [\emph{cf.} the uniform upper bound~\eqref{UBiS0}], from~\eqref{quotrhoiinf} we conclude that
\begin{equation}
\label{rhoHto0}
\rho_H(t) \to 0 \quad \text{exponentially fast} \text{ as } t \to \infty.
\end{equation}
We can rewrite the differential equation~\eqref{odes}$_3$ for $\rho_L$ as
\begin{equation}
\label{oderhoLs0}
\rho_L'(t) = \left[\left(\gamma \overline{g} - \sqrt{\gamma \, \beta_L} + \eta(t)\right) - d \rho_L(t)\right] \rho_L(t),
\end{equation}
where the function $\eta(t)$ is defined as 
$$
\eta(t) = \left(\sqrt{\gamma \, \beta_L} - \frac{\gamma}{v_L(t)} \right) - \gamma \left(\mu_L(t) - \overline{\varphi} \right)^2 - d \rho_H(t). 
$$
Using the asymptotic results~\eqref{viinf},~\eqref{muiinf} and~\eqref{rhoHto0}, we see that
\begin{equation}
\label{newresetatozero}
\eta(t)  \to 0 \quad \text{exponentially fast} \text{ as } t \to \infty.
\end{equation}
Solving the differential equation~\eqref{oderhoLs0} complemented with the initial condition $\rho_L(0)=\rho_L^0$ yields~\cite{chisholm2016evolutionary} 
\begin{equation}
\label{rhoLs0}
\rho_L(t) = \frac{\displaystyle \rho^0_L \exp\left[\int_0^t \left(\gamma \overline{g} - \sqrt{\gamma \, \beta_L} + \eta(s)\right) {\rm d}s\right]}{\displaystyle 1 + d \, \rho^0_L \int_0^t \exp\left[\int_0^s \left(\gamma \overline{g} - \sqrt{\gamma \, \beta_L} + \eta(z)\right) {\rm d}z\right]{\rm d}s}.
\end{equation}
The result~\eqref{newresetatozero} ensures that in the asymptotic regime $t \to \infty$ we have
$$
\exp\left[\int_0^t \left(\gamma \overline{g} - \sqrt{\gamma \, \beta_L} + \eta(s)\right) {\rm d}s\right] \sim C \, \exp\left[\left(\gamma \overline{g} - \sqrt{\gamma \, \beta_L} \right) t \right]
$$
and, under the additional assumption~\eqref{Th1assii}, we also have 
$$
 \int_0^t \exp\bigg[\int_0^s \bigg(\gamma \overline{g} - \sqrt{\gamma \, \beta_L} + \eta(z)\bigg) {\rm d}z\bigg] {\rm d}s \sim
 C \, \frac{\exp\left[{\left(\gamma \overline{g} - \sqrt{\gamma \, \beta_L}\right) \, t}\right]}{\gamma \overline{g} - \sqrt{\gamma \, \beta_L}},
$$
for some positive constant $C$. These asymptotic relations, along with the expression~\eqref{rhoLs0} for $\rho_L$, allow us to conclude that
\begin{equation}
\label{rhoLtobarrhoL}
\lim_{t\to\infty} \rho_L(t) = \frac{\gamma \overline{g} - \sqrt{\gamma \, \beta_L}}{d}.
\end{equation}
Claims~\eqref{Th1ii} and \eqref{Th1iii} follow from the asymptotic results~\eqref{rhoHto0} and \eqref{rhoLtobarrhoL}, and the asymptotic results~\eqref{muiinf} and~\eqref{viinf} with $i=L$. 
\qed
\end{proof}

The asymptotic results established by Theorem~\ref{theorem1} provide a mathematical formalisation of the idea that in constant environments:
\begin{enumerate}
\item populations undergoing spontaneous phenotypic variation at a rate that is too large compared to the maximum fitness will ultimately go extinct [\emph{cf.} point (i)];
\item {\it ceteris paribus}, if at least one population undergoes spontaneous phenotypic variation at a rate sufficiently small compared to the maximum fitness [\emph{cf.} point (ii)] then: 
\begin{enumerate}
\item[2(a).] the population with the lower rate of phenotypic variation will outcompete the other population;
\item[2(b).] the equilibrium phenotype distribution of the surviving population will be unimodal with the mean phenotype corresponding to the fittest phenotypic state and the related variance being directly proportional to the rate of phenotypic variations.
\end{enumerate}
\end{enumerate}

\subsection{Evolutionary dynamics in periodically fluctuating environments}
\label{Section32}
We now focus on the case of environments that undergo fluctuations with period $T>0$, and we assume the nutrient concentration to be Lipschitz continuous and $T$-periodic, \emph{i.e.} we let $S : [0,\infty) \to \mathbb{R}_{\geq 0}$ satisfy the assumptions 
\beq
\label{aST}
S \in {\rm Lip}([0,\infty)) \quad \text{and} \quad S(t+T) = S(t) \quad \text{for all } t \geq 0,
\eeq 
which implies that the functions $g(t)$ and $\varphi(t)$ satisfy the assumptions
\beq
\label{gphiSTa}
g, \varphi \in {\rm Lip}([0,\infty)), \;\;  g(t+T) = g(T) \;\; \text{and} \;\; \varphi(t+T) = \varphi(T) \;\; \text{for all } t \geq 0.
\eeq
Our main results are summarised by Theorem~\ref{theorem2}, the proof of which relies on the results established by Lemma~\ref{Lemma1} and Lemma~\ref{Lemma2}. 
\begin{lemma}
\label{Lemma1}
Under assumptions~\eqref{R}, \eqref{pn2ori}, \eqref{gh} and \eqref{aST}, the unique real $T$-periodic solution of the problem
\begin{equation}
\label{uieq}
\begin{cases}
u_i'(t) = 2 \, \sqrt{\gamma \, \beta_i} \, \left(\varphi(t) - u_i(t) \right), \quad \text{for } t \in (0,T),
\\
u_i(0) = u_i(T),
\end{cases}
\end{equation}
is
\begin{align}
\label{uiTh2}
u_i(t) &= \frac{2\sqrt{\gamma\beta_i}\exp\left(-2\sqrt{\gamma\beta_i}t\right)}{\exp\left(2 \, \sqrt{\gamma\beta_i} \, T\right)-1}\int_0^T\exp\left(2 \, \sqrt{\gamma\beta_i} \, s\right)\varphi(s) \, {\rm d}s \nonumber \\
& \quad +2\sqrt{\gamma\beta_i}\exp\left(-2 \sqrt{\gamma\beta_i} \, t\right)\int_0^t\exp\left(2 \sqrt{\gamma\beta_i} \, s\right)\varphi(s) \, {\rm d}s,
\end{align}
and satisfies the integral identity 
\beq
\label{uimv}
\frac{1}{T} \int_0^{T} u_i(t) \, {\rm d}t = \frac{1}{T} \int_0^{T} \varphi(t) \, {\rm d}t.
\eeq
\end{lemma}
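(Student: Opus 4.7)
The plan is to recognise \eqref{uieq} as a linear first-order ODE with $T$-periodic forcing, solve it by the integrating factor method, and then impose the periodicity condition to fix the initial value. Writing $\alpha_i := 2\sqrt{\gamma\beta_i}$ for brevity, the equation takes the form $u_i' + \alpha_i u_i = \alpha_i \varphi(t)$, so multiplying through by the integrating factor $e^{\alpha_i t}$ and integrating from $0$ to $t$ yields
\begin{equation*}
u_i(t) = e^{-\alpha_i t} u_i(0) + \alpha_i e^{-\alpha_i t} \int_0^t e^{\alpha_i s}\varphi(s)\,{\rm d}s.
\end{equation*}

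Next I impose the periodicity boundary condition $u_i(T)=u_i(0)$. This gives a single linear equation for $u_i(0)$, namely $u_i(0)\bigl(1 - e^{-\alpha_i T}\bigr) = \alpha_i e^{-\alpha_i T}\int_0^T e^{\alpha_i s}\varphi(s)\,{\rm d}s$, which can be solved since $\alpha_i > 0$ guarantees $e^{\alpha_i T}-1 \neq 0$. Substituting the resulting value of $u_i(0)$ back into the variation-of-constants formula produces exactly \eqref{uiTh2} after elementary algebra. Uniqueness follows from the same non-degeneracy observation: any two $T$-periodic solutions differ by a $T$-periodic solution of the homogeneous equation $v' + \alpha_i v = 0$, whose only solutions are $v(t) = v(0)e^{-\alpha_i t}$, which can be $T$-periodic only if $v(0)=0$.

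For the mean-value identity \eqref{uimv}, I do not use the explicit formula at all. Instead I integrate the ODE directly over one period:
\begin{equation*}
\int_0^T u_i'(t)\,{\rm d}t = \alpha_i \int_0^T \varphi(t)\,{\rm d}t - \alpha_i \int_0^T u_i(t)\,{\rm d}t.
\end{equation*}
The left-hand side equals $u_i(T) - u_i(0) = 0$ by periodicity, so dividing through by $\alpha_i T > 0$ yields \eqref{uimv} immediately.

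I expect no serious obstacle: the argument is entirely linear and the explicit formula falls out of a standard computation. The only mild point to be careful about is keeping track of the exponential factors when rearranging $u_i(0)$ into the prefactor of the first term in \eqref{uiTh2}, and invoking the Lipschitz regularity of $\varphi$ from \eqref{gphiSTa} only to guarantee that the integrals involved are well defined and that the formula gives a genuine $C^1$ solution of the ODE.
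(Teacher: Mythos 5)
Your argument is correct: the integrating-factor formula, the determination of $u_i(0)$ from the periodicity condition (legitimate since $e^{2\sqrt{\gamma\beta_i}T}-1\neq 0$), the uniqueness via the homogeneous equation, and the derivation of \eqref{uimv} by integrating the ODE over one period all check out and reproduce \eqref{uiTh2} exactly. The paper itself does not prove Lemma~\ref{Lemma1} but defers to the reference \cite{lorenzi2015dissecting}; your proof is the standard computation one would expect there, so there is nothing to flag.
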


\begin{lemma}
\label{Lemma2}
Let

\beq
\label{LambdaTh2}
\Lambda_i = \sqrt{\beta_i} + \frac{\sqrt{\gamma}}{T} \, \int_0^{T} \left(u_i(s) - \varphi(s)\right)^2 \, {\rm d}s \quad \text{for} \quad i \in \left\{H,L\right\},
\eeq

and
\beq
\label{QiTh2}
Q_i(t) = \gamma \, g(t) - \sqrt{\gamma \beta_i} - \gamma \left(u_i(t) - \varphi(t)\right)^2
\eeq
with $u_i(t)$ given by~\eqref{uiTh2}. Under assumptions~\eqref{R}, \eqref{pn2ori}, \eqref{gh}, \eqref{aST} and the additional assumption
$$
\Lambda_i < \frac{\sqrt{\gamma}}{T}  \int_0^{T} g(t) \, {\rm d}t,
$$
the unique real non-negative $T$-periodic solution of the problem 
\begin{equation}
\label{wieq}
\begin{cases}
w_i'(t) = \left(Q_i(t) - d \, w_i(t)\right) w_i(t), \quad \text{for } t \in (0,T),
\\
w_i(0) = w_i(T),
\end{cases}
\end{equation}
is
\beq
\label{wiTh2}
w_i(t) = \frac{\displaystyle d^{-1}\, \exp\left( \int_0^t Q_i(s) \, {\rm d}s \, \right)}{\displaystyle \frac{\displaystyle \int_0^T \exp\left(\int_0^s Q_i(z) \, {\rm d}z\right) {\rm d}s}{\displaystyle  \exp\left( \int_0^T Q_i(s) \, {\rm d}s \right) - 1} + \int_0^t \exp\left( \int_0^s Q_i(z) \, {\rm d}z \right) {\rm d}s} 
\eeq
and satisfies the integral identity 
\beq
\label{wimv}
\frac{1}{T} \int_0^{T} w_i(t) \, {\rm d}t = \frac{\sqrt{\gamma}}{d} \left(\frac{\sqrt{\gamma}}{T}  \int_0^{T} g(t) \, {\rm d}s - \Lambda_i \right).
\eeq
\end{lemma}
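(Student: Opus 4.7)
The plan is to exploit the Bernoulli structure of~\eqref{wieq} and linearise it by the substitution $y_i(t) = 1/w_i(t)$, which is legitimate as long as we look for a strictly positive solution. Under this substitution the logistic equation $w_i' = (Q_i - d w_i) w_i$ becomes the linear inhomogeneous equation $y_i'(t) + Q_i(t) y_i(t) = d$. Multiplying by the integrating factor $\exp\!\left(\int_0^t Q_i(s) \, {\rm d}s\right)$ and integrating from $0$ to $t$ yields an explicit expression for $y_i(t)$ in terms of $y_i(0)$, and the $T$-periodicity condition $y_i(T) = y_i(0)$ then pins down $y_i(0)$ uniquely. A short calculation gives
\[
y_i(0) = \frac{d \int_0^T \exp\!\left(\int_0^s Q_i(z)\, {\rm d}z\right) {\rm d}s}{\exp\!\left(\int_0^T Q_i(s) \, {\rm d}s\right) - 1}.
\]

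Next I would verify that this $y_i(0)$ is strictly positive, and hence that $y_i(t)>0$ for every $t$, which is equivalent to $\int_0^T Q_i(s) \, {\rm d}s > 0$. Using~\eqref{QiTh2} and~\eqref{LambdaTh2},
\[
\frac{1}{T} \int_0^T Q_i(s) \, {\rm d}s = \frac{\gamma}{T} \int_0^T g(s) \, {\rm d}s - \sqrt{\gamma \beta_i} - \frac{\gamma}{T} \int_0^T (u_i(s) - \varphi(s))^2 \, {\rm d}s = \sqrt{\gamma}\left(\frac{\sqrt{\gamma}}{T} \int_0^T g(s) \, {\rm d}s - \Lambda_i\right),
\]
so the standing hypothesis $\Lambda_i < \frac{\sqrt{\gamma}}{T}\int_0^T g(s) \, {\rm d}s$ is exactly the positivity condition required. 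Substituting $y_i(0)$ back into the expression for $y_i(t)$ and then inverting $w_i(t) = 1/y_i(t)$ produces the formula~\eqref{wiTh2}; the denominator there is a sum of two strictly positive terms, so $w_i$ is well-defined, positive, continuous, and $T$-periodic. For uniqueness within the class of non-trivial non-negative periodic solutions, I would observe that any such solution produces a positive $T$-periodic $y_i$ for the linear equation, and a linear inhomogeneous periodic ODE with $\int_0^T Q_i \ne 0$ admits at most one periodic solution, the associated homogeneous equation having Floquet multiplier $\exp\!\left(-\int_0^T Q_i\right) \ne 1$.

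Finally, the integral identity~\eqref{wimv} follows from a short averaging argument: dividing $w_i' = (Q_i - d w_i) w_i$ by $w_i > 0$, integrating over a period, and using $T$-periodicity to annihilate $\int_0^T (\ln w_i)' \, {\rm d}t$, one obtains $\int_0^T Q_i(s) \, {\rm d}s = d \int_0^T w_i(s) \, {\rm d}s$; combining this with the computation of $\int_0^T Q_i$ above yields~\eqref{wimv} immediately. The main obstacle is not analytical but notational: rearranging the explicit solution of the linear equation into precisely the form~\eqref{wiTh2}, while keeping the nested exponentials of integrals in the stated order, requires careful bookkeeping, but no step of the argument goes beyond elementary ODE theory.
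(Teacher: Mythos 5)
Your proposal is correct and complete: the Bernoulli substitution $y_i = 1/w_i$, the determination of $y_i(0)$ from periodicity, the observation that the hypothesis $\Lambda_i < \frac{\sqrt{\gamma}}{T}\int_0^T g(t)\,{\rm d}t$ is exactly the condition $\int_0^T Q_i(s)\,{\rm d}s > 0$ guaranteeing positivity, and the averaging argument $\int_0^T Q_i = d\int_0^T w_i$ for~\eqref{wimv} all check out. The paper itself does not reproduce a proof of Lemma~\ref{Lemma2} (it defers to~\cite{lorenzi2015dissecting}), but your argument is the standard route taken there, so this is essentially the intended proof; the only cosmetic point is that uniqueness should be stated among \emph{non-trivial} non-negative $T$-periodic solutions, since $w_i \equiv 0$ always solves~\eqref{wieq}, which you handle correctly by noting that any non-negative periodic solution touching zero must vanish identically.
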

We refer the interested reader to~\cite{lorenzi2015dissecting} for the proofs of Lemma~\ref{Lemma1} and Lemma~\ref{Lemma2}.

\begin{theorem} \label{theorem2}
Under assumptions~\eqref{assbeta}, \eqref{R}, \eqref{pn2ori}, \eqref{gh} and the additional assumptions \eqref{aST}, the solution of the system of PDEs \eqref{eS1} subject to the initial condition~\eqref{eS1S0ic} is of the Gaussian form \eqref{gaussian} and satisfies the following:
\begin{itemize}
\item[(i)] if
\beq
\label{Th2ass0}
\min \left\{\Lambda_H, \Lambda_L\right\} \; \geq \; \frac{\sqrt{\gamma}}{T} \int_{0}^{T} g(t) \, {\rm d}t 
\eeq
then
\beq
\label{Th20}
\lim_{t\to\infty}\rho_H(t)=0  \quad \text{and} \quad \lim_{t\to\infty}\rho_L(t)=0;
\eeq
\item[(ii)] if 
\beq
\label{Th2assi}
\min \left\{\Lambda_H, \Lambda_L\right\} \; < \; \frac{\sqrt{\gamma}}{T} \int_{0}^{T} g(t) \, {\rm d}t, 
\eeq
and
$$
i = \argmin{k \in \left\{H,L \right\}}  \Lambda_k, \quad j = \argmax{k \in \left\{H,L \right\}} \Lambda_k, 
$$
then
\beq
\label{Th2i}
\rho_i(t) \to w_i(t), \quad \rho_j(t) \to 0 \quad \text{as } t \to \infty,
\eeq
and
\beq
\label{Th2imu}
\mu_i(t) \to u_i(t), \quad \sigma^2_i(t) \to \sqrt{\frac{\beta_i}{\gamma}} \quad \text{as } t \to \infty,
\eeq
with $w_i(t)$ and $u_i(t)$ given by~\eqref{wiTh2} and \eqref{uiTh2}, respectively.
\end{itemize}
\end{theorem}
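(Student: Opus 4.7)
The plan is to invoke Proposition~\ref{Prop1} to reduce the system~\eqref{eS1}--\eqref{eS1S0ic} to the ODE Cauchy problem~\eqref{odes} for $(v_i,\mu_i,\rho_i)$ and then to mirror the four-step argument of Theorem~\ref{theorem1}, with the $T$-periodic trajectories $u_i(t)$ and $w_i(t)$ from Lemmas~\ref{Lemma1} and~\ref{Lemma2} playing the role of the constant limits used there. Equation~\eqref{odes}$_1$ for $v_i$ is decoupled and independent of the environment, so Step~1 of the proof of Theorem~\ref{theorem1} applies verbatim and yields $v_i(t)\to\sqrt{\gamma/\beta_i}$ exponentially fast; hence $\sigma_i^2(t)=1/v_i(t)\to\sqrt{\beta_i/\gamma}$, already establishing the variance statement in~\eqref{Th2imu}. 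Uniform non-negativity and an a priori upper bound on $\rho_i(t)$ follow as in Step~2 of that proof, using $g(S)\le 2$ from~\eqref{gh}, and will be used repeatedly below.

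I would next compare $\mu_i$ with $u_i$ by setting $y_i(t):=\mu_i(t)-u_i(t)$ and subtracting~\eqref{uieq} from~\eqref{odes}$_2$ to obtain
\[
y_i'(t)=-\frac{2\gamma}{v_i(t)}\,y_i(t)+\Bigl(\frac{2\gamma}{v_i(t)}-2\sqrt{\gamma\beta_i}\Bigr)\bigl(\varphi(t)-u_i(t)\bigr).
\]
Since $v_i\to\sqrt{\gamma/\beta_i}$ exponentially and $\varphi-u_i$ is bounded, the forcing term decays exponentially, and an integrating-factor argument with the uniformly positive damping $2\gamma/v_i(t)$ gives $y_i(t)\to 0$ exponentially. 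Combined with the limit of $v_i$, definitions~\eqref{Fit} and~\eqref{QiTh2} then yield $F_i(t)-Q_i(t)\to 0$ exponentially. This reduces the asymptotics of $\rho_i(t)=\rho_i^0\exp\bigl[\int_0^t(F_i(s)-d\rho(s))\,ds\bigr]$ to the growth of $\int_0^t Q_i(s)\,ds$, whose mean over a period equals $\sqrt{\gamma}\bigl(\frac{\sqrt{\gamma}}{T}\int_0^T g\,dt-\Lambda_i\bigr)$ and is $\le 0$ under~\eqref{Th2ass0}. In the strict case $\int_0^t Q_i\to-\infty$ linearly and $\rho_i\to 0$ exponentially; in the borderline equality case the integral stays bounded by $T$-periodicity, and a short dichotomy on whether $\int_0^\infty\rho(s)\,ds$ is finite (using Lipschitz regularity of $\rho_i$) still forces $\rho_i\to 0$, proving~\eqref{Th20}.

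For claim~(ii), I would use the quotient trick $\frac{d}{dt}\ln(\rho_j/\rho_i)=F_j(t)-F_i(t)$, whose right-hand side has periodic average $-\sqrt{\gamma}(\Lambda_j-\Lambda_i)<0$; together with $F_k-Q_k\to 0$ exponentially and the uniform upper bound on $\rho_j$, this yields $\rho_j(t)\to 0$ exponentially for $j=\argmax{k\in\{H,L\}}\Lambda_k$. Substituting back, $\rho_i$ satisfies a perturbed periodic logistic equation
\[
\rho_i'(t)=\bigl(Q_i(t)+\eta(t)-d\rho_i(t)\bigr)\rho_i(t),\qquad \eta(t)\to 0 \text{ exponentially},
\]
which is a Bernoulli equation and admits an explicit representation analogous to~\eqref{rhoLs0}. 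The main obstacle of the proof is this last step, namely upgrading the representation to $\rho_i(t)-w_i(t)\to 0$: unlike the constant-coefficient case of Theorem~\ref{theorem1} one cannot reduce the analysis to a single scalar limit, and one must carefully exploit the $T$-periodicity of $Q_i$ together with the explicit formula~\eqref{wiTh2} and the integral identity~\eqref{wimv} in order to pass from the perturbed trajectory to the $T$-periodic attractor $w_i$, following the approach of~\cite{lorenzi2015dissecting}. Combining this with $\mu_i(t)\to u_i(t)$ from the preceding paragraph and the limit of $\sigma_i^2$ then delivers~\eqref{Th2i}--\eqref{Th2imu}.
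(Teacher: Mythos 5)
Your proposal is correct and follows the same route as the paper's proof: reduction to the Cauchy problem~\eqref{odes} via Proposition~\ref{Prop1}, the exponential limits of $v_i$ and $\mu_i$ (your explicit equation for $y_i=\mu_i-u_i$ just fills in the detail behind the paper's ``one can easily show'' that $\mu_i\to u_i$), the replacement of $F_i$ by the $T$-periodic $Q_i$ with mean $\sqrt{\gamma}\bigl(\frac{\sqrt{\gamma}}{T}\int_0^T g\,{\rm d}t-\Lambda_i\bigr)$ to get extinction under~\eqref{Th2ass0}, and the quotient argument giving $\rho_j\to 0$ under~\eqref{Th2assi}. The only genuine divergence is the final step, passing from the perturbed periodic logistic equation for $\rho_i$ to the periodic attractor $w_i$: the paper checks that the limiting right-hand side $f(\tilde\rho_i,t)$ is Lipschitz and $T$-periodic and that solutions are bounded, and then invokes Massera's Convergence Theorem, after which Lemma~\ref{Lemma2} identifies the unique non-negative nontrivial periodic solution as~\eqref{wiTh2}; you instead propose grinding through the explicit Bernoulli representation and the identity~\eqref{wimv}, which is the single-population strategy of~\cite{lorenzi2015dissecting} and also works, but you leave it as a declared obstacle rather than executing it --- Massera's theorem is the cleaner way to close that step. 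Two minor points: in the quotient argument it is the uniform bound on $\rho_i$ (not $\rho_j$) that converts $\rho_j/\rho_i\to 0$ into $\rho_j\to 0$, though both bounds hold so nothing is lost; and your dichotomy on $\int_0^\infty\rho$ in the borderline case of~\eqref{Th2ass0} is actually more careful than the paper's treatment of that equality case.
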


\begin{proof}
Proposition~\ref{Prop1} ensures that the population density function $n_i(x,t)$ is of the Gaussian form \eqref{gaussian} with the population size, $\rho_i(t)$, the mean phenotypic state, $\mu_i(t)$, and the inverse of the related variance, $v_i(t)=1/\sigma_i^2(t)$, being governed by the Cauchy problem~\eqref{odes}. In this framework, we prove Theorem~\ref{theorem2} in 4 steps. In Step 1, we study the asymptotic behaviour of $v_i(t)$, $\mu_i(t)$ and $F_i(t)$ for $t \to \infty$. In Step 2, we show that $\rho_i(t)$ is non-negative and uniformly bounded. In Step 3, we prove claim~\eqref{Th20}. Finally, we prove claims~\eqref{Th2i} and \eqref{Th2imu} in Step 4.

\paragraph{Step 1: asymptotic behaviour of $v_i(t)$, $\mu_i(t)$ and $F_i(t)$ for $t \to \infty$.} Since the differential equation~\eqref{odes}$_1$ does not depend on $S(t)$, the expression~\eqref{vi} of $v_i(t)$ obtained in the proof of Theorem~\ref{theorem1} still holds and 
\begin{equation}
\label{viST}
v_i \to  \sqrt{\frac{\gamma}{\beta_i}} \quad \text{exponentially fast} \text{ as } t \to \infty.
\end{equation}
Moreover, using the asymptotic result~\eqref{viST} along with the linear differential equation~\eqref{odes}$_2$ for $\mu_i(t)$ one can easily show that 
\beq
\label{muiinfST}
\mu_i(t) \to u_i(t) \quad \text{exponentially fast} \text{ as } t \to \infty
\eeq
where $u_i(t)$ is a $T$-periodic solution of the differential equation~\eqref{uieq}. Lemma~\ref{Lemma1} ensures that $u_i(t)$ is given by~\eqref{uiTh2}. Lastly, for $F_i(t)$ defined according to~\eqref{Fit}, the asymptotic results~\eqref{viST} and~\eqref{muiinfST} allow us to conclude that
\begin{equation}
\label{FiinfST}
 F_i(t) \to \gamma \, g(t) - \sqrt{\gamma \, \beta_i}  - \gamma \left(u_i(t) - \varphi(t) \right)^2 \quad \text{exponentially fast} \text{ as } t \to \infty.
\end{equation}

\paragraph{Step 2: non-negativity and boundedness of $\rho_i(t)$.} Proceeding in a similar way as in the proof of Theorem~\ref{theorem1} (\emph{cf.} Step 2 in the proof of Theorem~\ref{theorem1}), one can prove that
\beq
\label{UBi}
0 \leq \rho_i(t) \leq \max \left\{\rho_i^0, \frac{2 \, \gamma}{d} \right\} \quad \text{for all } t \geq 0.
\eeq

\paragraph{Step 3: proof of claim~\eqref{Th20}.} Solving the differential equation~\eqref{odes}$_3$ for $\rho_i$ and imposing the initial condition~\eqref{odes}$_4$ yields
\begin{equation}
\label{rhoiST}
\rho_i(t) = \rho^0_i \exp\left[\int_0^{t} \left(F_i(s) - d \rho(s) \right) {\rm d}s\right],
\end{equation}
with $F_i(t)$ defined according to \eqref{Fit}. Combining the asymptotic result~\eqref{FiinfST} with the expression~\eqref{rhoiST} for $\rho_i(t)$ gives
\begin{eqnarray}
\label{rhoiSTmT}
\rho_i(t) &\sim& C \, \rho^0_i \exp\Bigg[\gamma \, \int_{0}^{t} g(s) \, {\rm d}s -  \sqrt{\gamma \, \beta_i} \, t - \gamma \, \int_{0}^{t} \left(u_i(s) - \varphi(s) \right)^2 \, {\rm d}s  \nonumber \\
&& \hspace{1.5cm} - d \int_0^{t} \hspace{-0.1cm}\rho(s) \, {\rm d}s\Bigg] \quad \text{as } t \to \infty,
\end{eqnarray}
for some positive constant $C$. Hence, using the fact that the functions $g(t)$, $\varphi(t)$ and $u_i(t)$ are $T$-periodic and considering ${\rm m} \to \infty$, we find
\begin{eqnarray}
\label{rhoiSTmT2}
\rho_i(t) &\sim& C \exp\Bigg[\gamma \, {\rm m} \,  \int_{0}^{T} g(t) \, {\rm d}t -  {\rm m} \, T \sqrt{\gamma \, \beta_i} - \gamma \, {\rm m} \, \int_{0}^{T} \left(u_i(t) - \varphi(t) \right)^2 \, {\rm d}t  \nonumber \\
&& \hspace{1.5cm} - d \int_0^{t} \hspace{-0.1cm}\rho(s) \, {\rm d}s\Bigg] \quad \text{as } t \to \infty,
\end{eqnarray}
for some positive constant $C$. Since the function $\rho(t)$ is non-negative [\emph{cf.} the uniform lower bound~\eqref{UBi}], the asymptotic relation~\eqref{rhoiSTmT2} ensures that if
$$
\sqrt{\beta_i} + \frac{\sqrt{\gamma}}{T} \int_{0}^{T} \left(u_i(t) - \varphi(t) \right)^2 \, {\rm d}t  \geq \frac{\sqrt{\gamma}}{T} \int_{0}^{T} g(t) \, {\rm d}t
$$ 
then
\beq
\label{extinctionST}
\lim_{t\to\infty}\rho_i(t)=0.
\eeq
This proves that if assumption~\eqref{Th2ass0} is satisfied then claim~\eqref{Th20} is verified.

\paragraph{Step 4: proof of claims~\eqref{Th2i} and ~\eqref{Th2imu}.} Let 
\beq
\label{indne}
i = \argmin{k \in \left\{H,L \right\}}  \Lambda_k \quad \text{and} \quad j = \argmax{k \in \left\{H,L \right\}} \Lambda_k.
\eeq
As long as $\rho_j(t)>0$, we can compute the quotient of $\rho_i(t)$ and $\rho_j(t)$ through~\eqref{rhoiST}. In so doing, using the asymptotic relation~\eqref{rhoiSTmT} for $\rho_i(t)$ and $\rho_j(t)$ and considering ${\rm m} \to \infty$ we obtain
\begin{equation}
\label{quotne2}
\frac{\rho_i(t)}{\rho_j(t)} \sim C \exp\bigg[{\rm m} \, T \sqrt{\gamma} \, \Big(\Lambda_j - \Lambda_i \Big) \bigg] \quad \text{as } t \to \infty,
\end{equation}
for some positive constant $C$, with $\Lambda_i$ and $\Lambda_j$ defined according to~\eqref{LambdaTh2}. The asymptotic relation~\eqref{quotne2} allows us to conclude that
\beq
\label{limquotne}
\lim_{t \to \infty} \frac{\rho_i(t)}{\rho_j(t)} = \infty.
\eeq
Since $\rho_i$ is uniformly bounded from above [\emph{cf.} the uniform upper bound \eqref{UBi}], the asymptotic result~\eqref{limquotne} implies that
\beq
\label{rhoSTto0}
 \rho_j(t) \to 0  \quad \text{exponentially fast} \text{ as } t \to \infty.
\eeq
We can rewrite the differential equation~\eqref{odes}$_3$ for $\rho_i$ as
\begin{equation}
\label{oderhoiST}
\rho_i'(t) = \left[\gamma g(t) - \sqrt{\gamma \, \beta_i}  - \gamma \, \left(u_i(t) - \varphi(t) \right)^2 + \eta(t) - d \rho_i(t)\right] \rho_i(t),
\end{equation}
where the function $\eta(t)$ is defined as 
$$
\eta(t) = \left(\sqrt{\gamma \, \beta_i} - \frac{\gamma}{v_i(t)} \right) + \gamma \, \left[\left(u_i(t) - \varphi(t) \right)^2 - \left(\mu_i(t) - \varphi(t) \right)^2\right] - d \rho_j(t).
$$
Using the asymptotic results~\eqref{viST}, \eqref{muiinfST} and~\eqref{rhoSTto0} we see that $\eta(t)  \to 0$ exponentially fast as $t \to \infty$. Hence, $\rho_i(t) \to \tilde{\rho}_i(t) $ as $t \to \infty$, with $\tilde{\rho}_i(t)$ being the solution of the differential equation
\begin{equation}
\label{oderhoiSTtilde}
\tilde \rho_i' = f(\tilde \rho_i,t),
\end{equation}
with
$$
f(\tilde \rho_i,t) = \left[\gamma g(t) - \sqrt{\gamma \, \beta_i}  - \gamma \, \left(u_i(t) - \varphi(t) \right)^2 - d \tilde \rho_i\right] \tilde \rho_i,
$$
subject to an initial condition $0<\tilde \rho_i(0)<\infty$. We note that: (i) the function $\tilde \rho_i(t)$ is uniformly bounded as it satisfies the upper and lower bounds 
$$
0 \leq \tilde \rho_i(t) \leq \max \left\{\tilde \rho_i(0), \frac{2 \, \gamma}{d} \right\} \quad \text{for all } t \geq 0;
$$
(ii) the function $f$ is Lipschitz continuous in the first variable; (iii) the function $f$ is Lipschitz continuous and $T$-periodic in the second variable, since $g$, $\varphi$ and $u_i$ are $T$-periodic Lipschitz continuous functions of $t$ [\emph{cf.} assumptions \eqref{gphiSTa} and expression \eqref{uiTh2}]. Therefore, the conditions of Massera's Convergence Theorem~\cite{massera1950existence,smith1986massera} are satisfied and this allows us to conclude that
\beq
\label{tilderhoiinfST}
\tilde \rho_i(t) \longrightarrow w_i(t) \quad \text{as } t \to \infty,
\eeq
with $w_i(t)$ being a non-negative $T$-periodic solution of the differential equation~\eqref{wieq}. Under the additional assumption \eqref{Th2assi}, that is,
$$
\Lambda_i < \; \frac{\sqrt{\gamma}}{T} \int_{0}^{T} g(t) \, {\rm d}t,
$$
Lemma~\ref{Lemma2} ensures that $w_i(t)$ is given by~\eqref{wiTh2}. Claims~\eqref{Th2i} and \eqref{Th2imu} follow from the asymptotic results~\eqref{rhoSTto0} and \eqref{tilderhoiinfST} along with the asymptotic result~\eqref{muiinfST} and~\eqref{viST} with $i = \argmin{k \in \left\{H,L \right\}}  \Lambda_k$. 
\qed
\end{proof}

In summary, Theorem~\ref{theorem2} gives an explicit characterisation of the long-term limit of $v_i(t)$, $\mu_i(t)$ and $\rho_i(t)$ for the surviving population $i$ and shows that the surviving population is the one characterised by the larger positive value of the quantity
$$
\frac{1}{T} \int_0^T F^{\infty}_i(t) \; {\rm d}t = \frac{\gamma}{T} \int_0^T g(t) \; {\rm d}t  - \sqrt{\gamma} \, \Lambda_i
$$
where 
$$
F^{\infty}_i(t) = \lim_{t \to \infty} F_i(t) = \gamma \, g(t) - \sqrt{\gamma \beta_i} - \gamma \, \left(u_i(t) - \varphi(t)\right)^2.
$$

\begin{remark}
\label{periodoscil}
Since the functions $u_i(t)$ and $w_i(t)$ are $T$-periodic and satisfy the integral identities \eqref{uimv} and \eqref{wimv}, respectively, the results established by Theorem~\ref{theorem2} show that the long-term limits of the size and the mean phenotypic state of the surviving population are periodic functions of time with period $T$ and mean values given by~\eqref{uimv} and \eqref{wimv}, respectively.
\end{remark}

\begin{remark}
\label{remarkLambda}
Using the differential equation~\eqref{uieq} for $u_i(t)$ one can easily obtain
$$
\frac{{\rm d}}{{\rm d}t} \left(u_i - \varphi \right)^2 = 4 \, \gamma \, \sqrt{\gamma \, \beta_i} \left[\frac{1}{2} \, \frac{1}{\sqrt{\gamma \, \beta_i}} \, \left(\varphi - u_i\right) \varphi' -  \left(u_i - \varphi \right)^2\right].
$$
Integrating both sides of the above equation with respect to $t$ between $0$ and $T$, and using the fact that $u_i(T) - \varphi(T) = u_i(0) - \varphi(0)$, yields
\beq
\label{uimphisqmv}
\frac{1}{T} \, \int_0^{T} \left(u_i(t) - \varphi(t) \right)^2 \, {\rm d}t = \frac{1}{2 \sqrt{\gamma \, \beta_i}} \, \frac{1}{T} \, \int_0^{T} \left(\varphi(t) - u_i(t)\right) \varphi'(t)  \, {\rm d}t.
\eeq
Therefore, definition~\eqref{LambdaTh2} can be rewritten as
\beq
\label{LambdaTh2altern}
\Lambda_i = \sqrt{\beta_i} +  \frac{1}{2 \sqrt{\beta_i}} \, \frac{1}{T} \, \int_0^{T} \left(\varphi(t) - u_i(t)\right) \varphi'(t)  \, {\rm d}t.
\eeq
If $S \equiv \bar{S}$ then $g(t) \equiv \overline{g}$ and $\varphi(t) \equiv \overline{\varphi}$ (\emph{i.e.} $\varphi' \equiv 0$). In this case,
$$
\frac{\sqrt{\gamma}}{T}\int_0^T g(t) \, {\rm d}t = \overline{g}
$$
and~\eqref{LambdaTh2altern} allows one to see that $\Lambda_i=\sqrt{\beta_i}$. Hence, if $S$ is constant then the results of Theorem~\ref{theorem2} reduce to the results of Theorem~\ref{theorem1}. Moreover, the first term in the expression~\eqref{LambdaTh2altern} for $\Lambda_i$ is clearly a monotonically increasing function of $\beta_i$, whereas the factor in front of the integral in the second term is a monotonically decreasing function of $\beta_i$. Hence, if the mean value of the $T-$periodic function $\left(\varphi(t) - u_i(t)\right) \varphi'(t)$ is sufficiently small then $\Lambda_H > \Lambda_L$, while if such a mean value is sufficiently large then $\Lambda_H < \Lambda_L$. We expect the latter scenario to occur when the variability and the rate of change of $S(t)$ are sufficiently high so as to cause substantial and sufficiently fast variations in the value of $\varphi(t)$.
\end{remark}

The asymptotic results established by Theorem~\ref{theorem2} formalise mathematically the idea that in periodically fluctuating environments:
\begin{enumerate}
\item populations undergoing spontaneous phenotypic variations at a rate too large compared to the mean value of the maximum fitness will ultimately go extinct [\emph{cf.} point (i)];
\item {\it ceteris paribus}, if at least one population undergoes spontaneous phenotypic variations at a rate sufficiently small compared to the mean value of the maximum fitness, then the following behaviours are possible: 
\begin{enumerate}
\item[2(a).] when environmental conditions are relatively stable, the population with the lower rate of phenotypic variations will outcompete the other population [\emph{cf.} point (ii) and Remark~\ref{remarkLambda}];
\item[2(b).] when environmental conditions undergo drastic changes, either both populations go extinct  [\emph{cf.} point (i) and Remark~\ref{remarkLambda}] or the population with the higher rate of phenotypic variations will outcompete the other population [\emph{cf.} point (ii) and Remark~\ref{remarkLambda}];
\item[2(c).] the phenotype distribution of the surviving population will be unimodal, and both the population size and the mean phenotype will become periodic [\emph{cf.} point (ii) and Remark~\ref{periodoscil}];
\item[2(d).] ultimately, the population size and the mean phenotype will both oscillate with the same period as the fluctuating environment, and the mean value (with respect to time) of the mean phenotype will be the same as the mean value of the fittest phenotypic state with the related variance being directly proportional to the rate of phenotypic variations [\emph{cf.} point (ii) and Remark~\ref{periodoscil}].
\end{enumerate} 
\end{enumerate} 
These biological implications are reinforced by the numerical solutions presented in the next section.

\section{Numerical simulations} \label{Section4}
In this section, we construct numerical solutions to the system of non-local parabolic PDEs~\eqref{eS1} subject to the initial condition~\eqref{eS1S0ic}. In Section~\ref{Section41}, we describe the numerical methods employed and the set-up of numerical simulations. In Section \ref{Section42}, we present a sample of numerical solutions that confirm the results of our analysis of evolutionary dynamics, both in the case where $S(t)$ is constant and when $S(t)$ oscillates periodically. 

\subsection{Numerical methods and set-up of numerical simulations} \label{Section41}
We select a uniform discretisation consisting of $2000$ points on the interval $[-5,5]$ as the computational domain of the independent variable $x$ and impose no flux boundary conditions. Moreover, we assume $t \in [0,t_{f}]$, with $t_f>0$ being the final time of simulations, and we discretise the interval $[0,t_{f}]$ with the uniform step $\Delta t=0.0001$. The method for constructing numerical solutions to the system of non-local parabolic PDEs~\eqref{eS1} is based on an explicit finite difference scheme in which a three-point stencil is used to approximate the diffusion terms and an explicit finite difference scheme is used for the reaction term~\cite{leveque2007finite}. On the other hand, we use the {\sc Matlab} built-in solver {\sc ode45} to solve numerically the Cauchy problem~\eqref{odes} for $v_i(t)$, $\mu_i(t)$ and $\rho_i(t)$.

The parameter values used to carry out numerical simulations are listed in Table~\ref{table1}. In summary, to capture the fact that rates of spontaneous phenotypic variation are small, in general, and much smaller than maximum proliferation rates, in particular, we assume $\beta_i \ll \gamma$ for $i \in \left\{H,L\right\}$. Furthermore, given the values of $\gamma$ and $\beta_i$, we fix the value of $d$ to be such that the long-term limit~\eqref{Th1ii} of the size of the population $L$ is approximatively $10^4$, which is consistent with biological data from the existing literature regarding {\it in vitro} cell populations~\cite{voorde2019improving}. 

We remark that the value of the parameter $\beta_L$ and the range of values of the parameter $\beta_H$ reported in Table~\ref{table1} are such that neither condition~\eqref{Th1assi} nor condition~\eqref{Th2ass0} are met in all cases on which we report here. This ensures that the two populations do not simultaneously go extinct.
\begin{table}[h!] \centering 
\begin{tabular}{|l|l|l|}
\hline
\textbf{Parameter} & \textbf{Description}      & \textbf{Value / Value range} \\ \hline
$\gamma$     & Maximum proliferation rate                    &  100              \\ 
$d$             & Death rate due to competition            &      0.01          \\
$\beta_L$ & Rate of phenotypic variation of population $L$ & 0.01 \\ 
$\beta_H$ & Rate of phenotypic variation of population $H$ & [0.01, 0.1] \\
 \hline
\end{tabular}
\caption{Parameter values used to carry out numerical simulations.}
\label{table1}
\end{table}

We consider both populations to have the same initial phenotypic distribution~\eqref{eS1S0ic} with $v_i^0 = 20$, $\mu_i^0 = 0$ and $\rho_i^0 \approx 800$ for $i \in \left\{H,L\right\}$.

\subsection{Main results} \label{Section42}
We consider the following definition of the nutrient concentration
\begin{equation} \label{sinS}
S(t) = M + A \sin\left({\frac{2\pi t}{T}}\right).
\end{equation}
In definition~\eqref{sinS}, the parameter $M > 0$ represents the mean nutrient concentration, while the parameter $A \geq 0$ models the semi-amplitude of the oscillations of the nutrient concentration, which have period $T>0$. Clearly, we consider only values of $M$ and $A$ such that $S(t) \geq 0$, \emph{i.e.} $0 \leq A \leq M$.

We start by exploring three prototypical scenarios exemplified by different values of the parameter $A$. In particular, we choose $M=1$ and compare the numerical solutions obtained for $A=0$ (\emph{i.e.} constant nutrient concentration), $A=0.5$ (\emph{i.e.} lower nutrient variability) and $A=1$ (\emph{i.e.} higher nutrient variability). Figure~\ref{fig2a} displays plots of the nutrient concentration $S(t)$, the rescaled maximum fitness $g(t)$ and the fittest phenotypic state $\varphi(t)$ corresponding to such choices of the parameter $A$. These plots show that, as one would expect, higher nutrient variability brings about more pronounced variations in the rescaled maximum fitness and the fittest phenotypic state.
\begin{figure}[h!] \centering
\includegraphics[width=1 \linewidth]{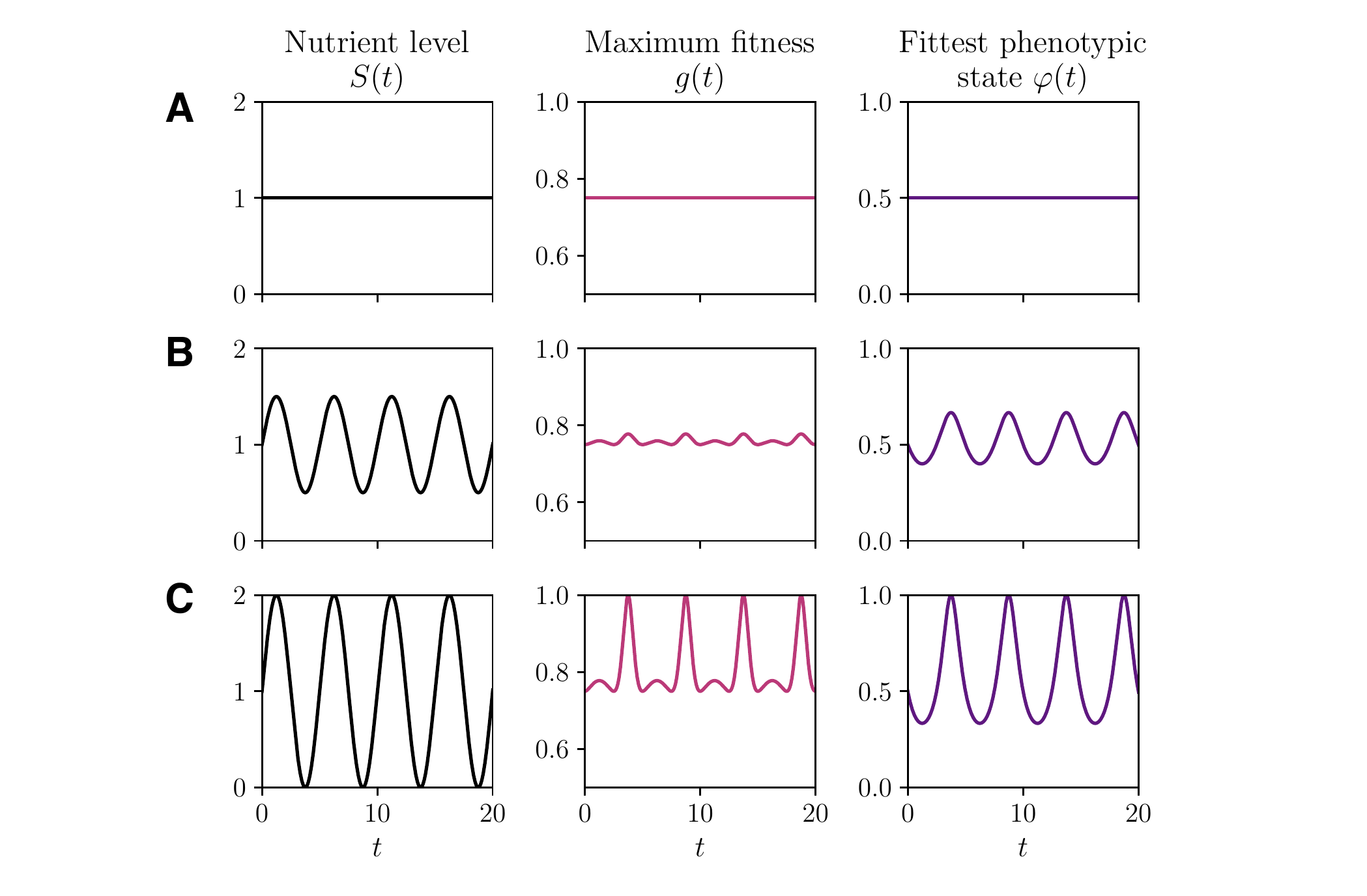}
\caption{\label{fig2a} {\bf {\textsf A.}} Plots of the nutrient concentration $S(t)$ (left panel) defined according to~\eqref{sinS} with $M=1$ and $A=0$, and the corresponding rescaled maximum fitness $g(t)$ (central panel) and fittest phenotypic state $\varphi(t)$ (right panel) defined according to~\eqref{gh}. {\bf {\textsf B.}} Same as row {\bf {\textsf A}} but with $A=0.5$ and $T=5$. {\bf{\textsf C.}} Same as row {\bf {\textsf A}} but with $A=1$ and $T=5$}
\end{figure}

Figure \ref{fig2b} shows a comparison between the exact solutions~\eqref{gaussian} -- with $v_i(t)$, $\mu_i(t)$ and $\rho_i(t)$ obtained by solving numerically the Cauchy problem~\eqref{odes} -- and the numerical solutions of the system of non-local parabolic PDEs~\eqref{eS1} subject to the initial condition~\eqref{eS1S0ic}. In agreement with the results established by Proposition~\ref{Prop1}, for all values of $A$ considered, there is a perfect match between the population sizes obtained by computing numerically the integrals of the components of the numerical solution of the system of PDEs~\eqref{eS1} (\emph{cf.} solid lines in the left column of Figure \ref{fig2b}) and the population sizes obtained by solving numerically the Cauchy problem~\eqref{odes} (\emph{cf.} dashed and dotted lines in the left column of Figure \ref{fig2b}). Similarly, there is excellent agreement between the population density functions obtained by solving numerically the system of PDEs~\eqref{eS1} (\emph{cf.} solid lines in the right column of Figure \ref{fig2b}) and the population density functions~\eqref{gaussian} with $v_i(t)$, $\mu_i(t)$ and $\rho_i(t)$ given by the numerical solutions of the Cauchy problem~\eqref{odes} (\emph{cf.} dashed and dotted lines in the right column of Figure \ref{fig2b}).
\begin{figure}[htb!] \centering
\includegraphics[width=1 \linewidth]{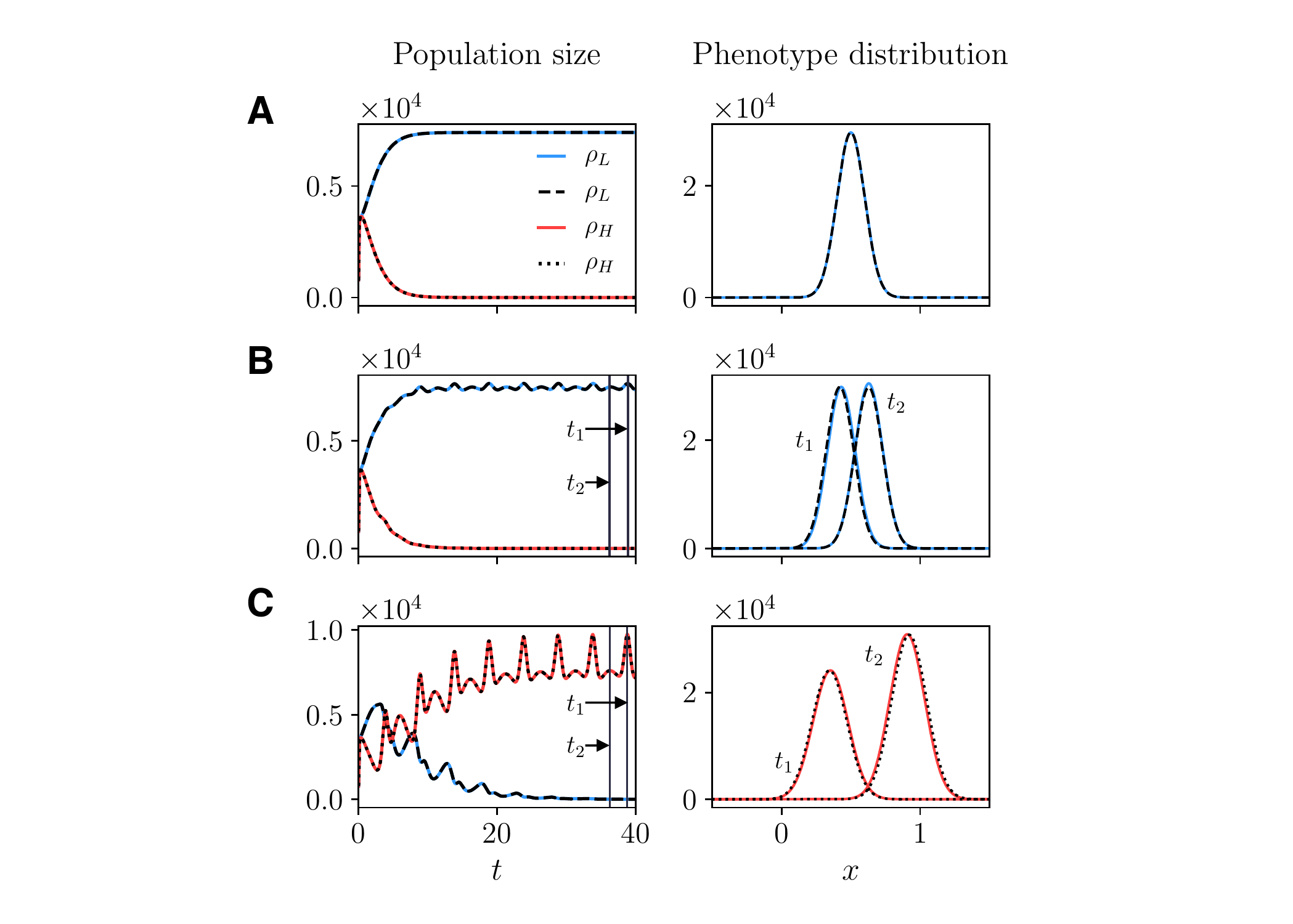}
\caption{\label{fig2b} {\bf Left column.} Plots of the population sizes $\rho_H(t)$ (red line) and $\rho_L(t)$ (blue line) obtained by computing numerically the integrals of the components of the numerical solution of the system of PDEs~\eqref{eS1} subject to the initial condition~\eqref{eS1S0ic}. The dotted and dashed lines highlight, respectively, $\rho_H(t)$ and $\rho_L(t)$ obtained by solving numerically the Cauchy problem~\eqref{odes}. The nutrient concentration $S(t)$ is defined according to~\eqref{sinS} with $M=1$ and $A=0$ (row {\bf {\textsf A}}), $M=1$, $A=0.5$ and $T=5$ (row {\bf {\textsf B}}), or $M=1$, $A=1$ and $T=5$ (row {\bf {\textsf C}}) -- \emph{cf.} the plots displayed in Figure~\ref{fig2a}. The values of the model parameters are those reported in Table~\ref{table1} with $\beta_H=0.025$. {\bf Right column.} Plots of the corresponding phenotype distribution of the surviving population obtained by solving numerically the system of PDEs~\eqref{eS1} subject to the the initial condition~\eqref{eS1S0ic}. In particular, the plot of $n_L(x,t_f)$ is shown in row {\bf {\textsf A}}, while the plots in rows {\bf {\textsf B}} and {\bf {\textsf C}} are, respectively, those of $n_L(x,t)$ and $n_H(x,t)$ at $t=t_1$ and $t=t_2$, with $t_1$ and $t_2$ being highlighted in the corresponding plots in the left column. The dotted and dashed lines correspond to the exact phenotype distributions~\eqref{gaussian} with $v_i(t)$, $\mu_i(t)$ and $\rho_i(t)$ given by the numerical solutions of the Cauchy problem~\eqref{odes}} 
\end{figure}

In accord with the results of Theorem~\ref{theorem1}, when the nutrient concentration is constant (\emph{i.e.} $S(t) \equiv M$), the population with the lower rate of phenotypic variations (\emph{i.e.} population $L$) outcompetes the other population ({\it vid.} Figure~\ref{fig2b}{\textsf A}). The size of the surviving population $\rho_L(t)$ reaches the asymptotic value~\eqref{Th1ii} and the phenotype distribution at the end of the simulations $n_L(x,t_f)$ is Gaussian with mean and variance equal to the asymptotic values~\eqref{Th1iii}. 

In agreement with the results established by Theorem~\ref{theorem2} (\emph{vid.} Remark~\ref{remarkLambda}), a similar outcome is observed in the presence of a low nutrient variability ({\it vid.} Figure~\ref{fig2b}{\textsf B}). In fact, in this case $\Lambda_L < \Lambda_H$. On the contrary, the population with the higher rate of phenotypic variations (\emph{i.e.} population $H$) outcompetes the other population when the nutrient variability is sufficiently high ({\it vid.} Figure~\ref{fig2b}{\textsf C}). This is due to the fact that in this case the condition $\Lambda_H < \Lambda_L$ is met. As expected (\emph{cf.} Remark~\ref{periodoscil}), since $A>0$ both the size and the mean phenotype of the surviving population become $T$-periodic, with mean values given by~\eqref{uimv} and \eqref{wimv}, respectively. Moreover, the phenotype distribution of the surviving population remains Gaussian with variance given by~\eqref{Th2imu}. This implies that if population $H$ outcompetes population $L$ then the variance of the phenotype distribution (\emph{i.e.} the level of phenotypic heterogeneity) will be ultimately larger than in the case where population $L$ is selected. 

Taken together, these results demonstrate that when the nutrient concentration is constant, or in the presence of a low level of nutrient variability, it is evolutionarily more desirable to rarely undergo spontaneous phenotypic variations, since environmental conditions are stable. Conversely, when nutrient variability is high (\emph{i.e.} alternating cycles of starvation and nutrient abundance occur), higher rates of spontaneous phenotypic variations constitute a competitive advantage, as they allow for a quicker adaptation to changeable environmental conditions, and higher level of phenotypic heterogeneity emerge.

Exploiting the results of the analysis of evolutionary dynamics developed in Section~\ref{Section3}, we can further assess the range of environmental conditions under which higher rates of spontaneous phenotypic variations will represent a source of competitive advantage. In more detail, as shown by the asymptotic results established by Theorem~\ref{theorem2}, provided that condition~\eqref{Th2ass0} is not satisfied (\emph{i.e.} at least one population survives), the outcome of competition between population $H$ and population $L$ in periodically fluctuating environments can be predicted by computing the value of the quantities $\Lambda_H$ and $\Lambda_L$ given by~\eqref{LambdaTh2}. In particular, the population characterised by the lower value of this quantity will ultimately be selected. Therefore, we computed $\Lambda_H$ and $\Lambda_L$ for different values of the period of the nutrient oscillations $T$ and different values of the rate of spontaneous phenotypic variations $\beta_H$. We used the values of the other evolutionary parameters reported in Table~\ref{table1} and considered possible values of the environmental parameters $M$ and $A$ corresponding to three different scenarios: an environment whereby the nutrient is abundant and undergoes small-amplitude periodic oscillations, \emph{i.e.} $M$ is relatively large and $A$ is relatively small (\emph{vid.} Figure~\ref{fig4}{\textsf A}); an environment whereby the nutrient is scarce and undergoes small-amplitude periodic oscillations, \emph{i.e.} $M$ and $A$ are both relatively small (\emph{vid.} Figure~\ref{fig4}{\textsf B}); and an environment whereby periodic oscillations can induce a sufficiently high variability of nutrient concentration, \emph{i.e.} $M=A$ and different values of $A$ are allowed (\emph{vid.} Figure~\ref{fig4}{\textsf C}). 
\begin{figure}[h!] \centering
\includegraphics[width=1 \linewidth]{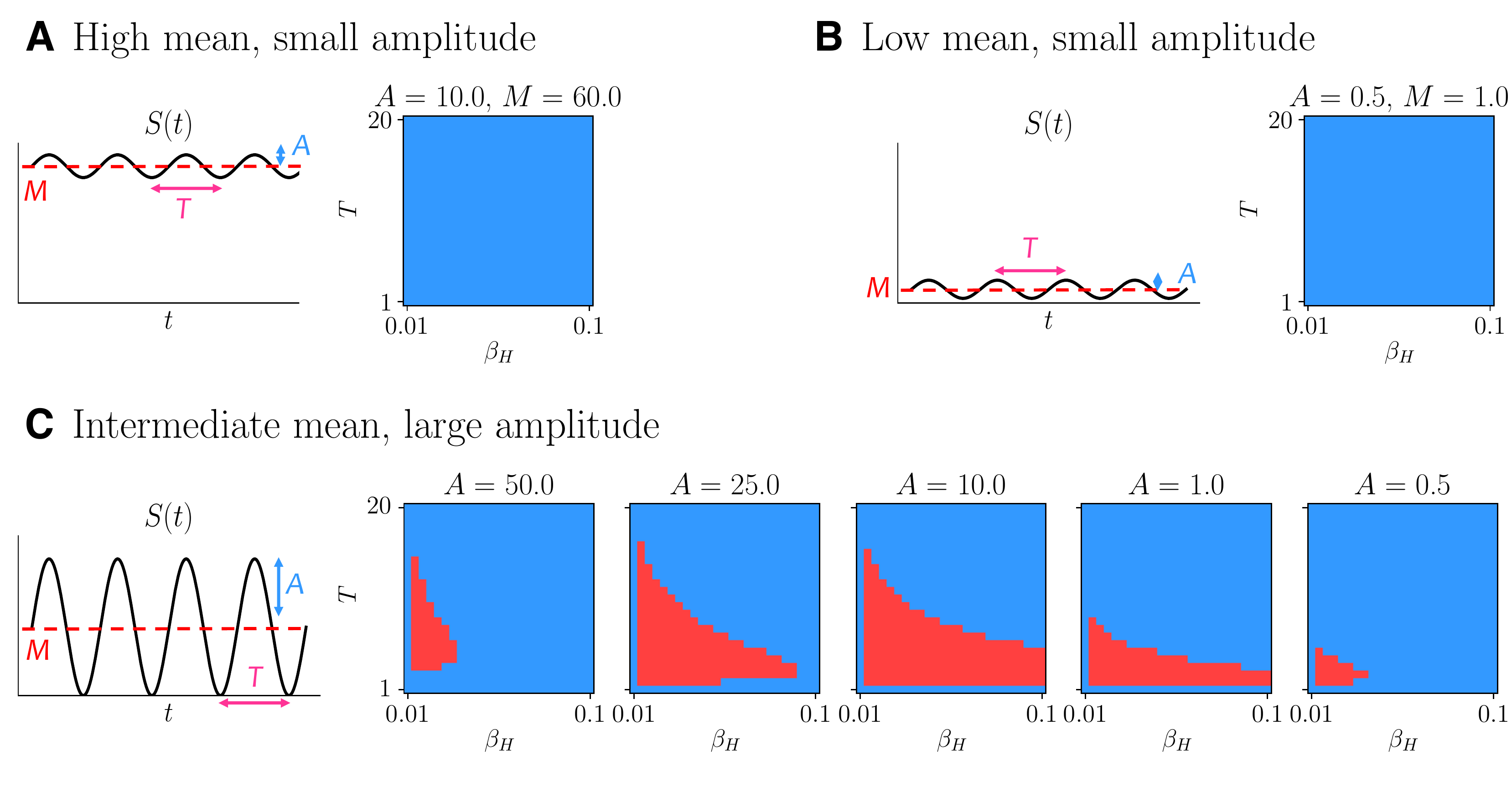}
\caption{\label{fig4} {\bf {\textsf A.}} Qualitative dynamics of the nutrient concentration $S(t)$ defined according to~\eqref{sinS} with $M=60$ and $A=10$, and corresponding plot of $\sgn \left(\Lambda_H -  \Lambda_L\right)$ as a function of $\beta_H~\in~(\beta_L, 0.1]$, with $\beta_L=0.01$, and $T \in [1,20]$. The quantities $\Lambda_H$ and $\Lambda_L$ are computed using~\eqref{LambdaTh2} and the values of the other model parameters reported in Table~\ref{table1}. The blue points in the $\beta_H-T$ plane correspond to $\sgn \left(\Lambda_H -  \Lambda_L\right) =1$ (\emph{i.e.} $\Lambda_L < \Lambda_H$), whereas the red points correspond to $\sgn \left(\Lambda_H -  \Lambda_L\right) = -1$ (\emph{i.e.} $\Lambda_H < \Lambda_L$). {\bf {\textsf B.}} Same as panel {\bf {\textsf A}} but with $M=1$ and $A=0.5$. {\bf {\textsf C.}} Same as panel {\bf {\textsf B}} but with $M=A$ and $A \in \left\{0.5, 1, 10, 25, 50\right\}$}
\end{figure}

The results obtained are summarised by the plots in Figure~\ref{fig4}. As we would expect (\emph{cf.} Remark~\ref{remarkLambda}), if the nutrient concentration undergoes low-amplitude periodic oscillations then $\Lambda_L < \Lambda_H$ for all values of $T$ and $\beta_H$ considered (\emph{vid.} Figure~\ref{fig4}{\textsf A} and Figure~\ref{fig4}{\textsf B}). On the other hand, when periodic oscillations can bring about sufficiently high levels of nutrient variability, there is a region of the $\beta_H-T$ plane where $\Lambda_H <  \Lambda_L$ (\emph{vid.} Figure~\ref{fig4}{\textsf C}). When the value of $A$ is either low or high, this region is small and concentrated in the bottom left corner of the plane. For intermediate values of $A$ the region where $\Lambda_H <  \Lambda_L$ is wider and such that the smaller the value $T$ (\emph{i.e.} the higher the frequency of the nutrient oscillations) the wider the range of values of $\beta_H$ that belong to it. 

Furthermore, we can investigate how the fluctuations in the nutrient level affect the phenotypic distribution of each population at any given time point by constructing numerical solutions to the system of non-local PDEs \eqref{erhos} subject to initial condition \eqref{eS1S0ic} with $S(t)$ defined according to \eqref{sinS}. This is demonstrated by the plots in Figure \ref{fig9}, which show sample dynamics of the nutrient concentration $S(t)$, the phenotype distributions $n_H(x,t)$ and $n_L(x,t)$, and the population sizes $\rho_H(x,t)$ and $\rho_L(x,t)$, for different values of semi-amplitude $A$ and mean $M$ of the fluctuations.
\begin{figure}[h!] \centering
\includegraphics[width=1 \linewidth]{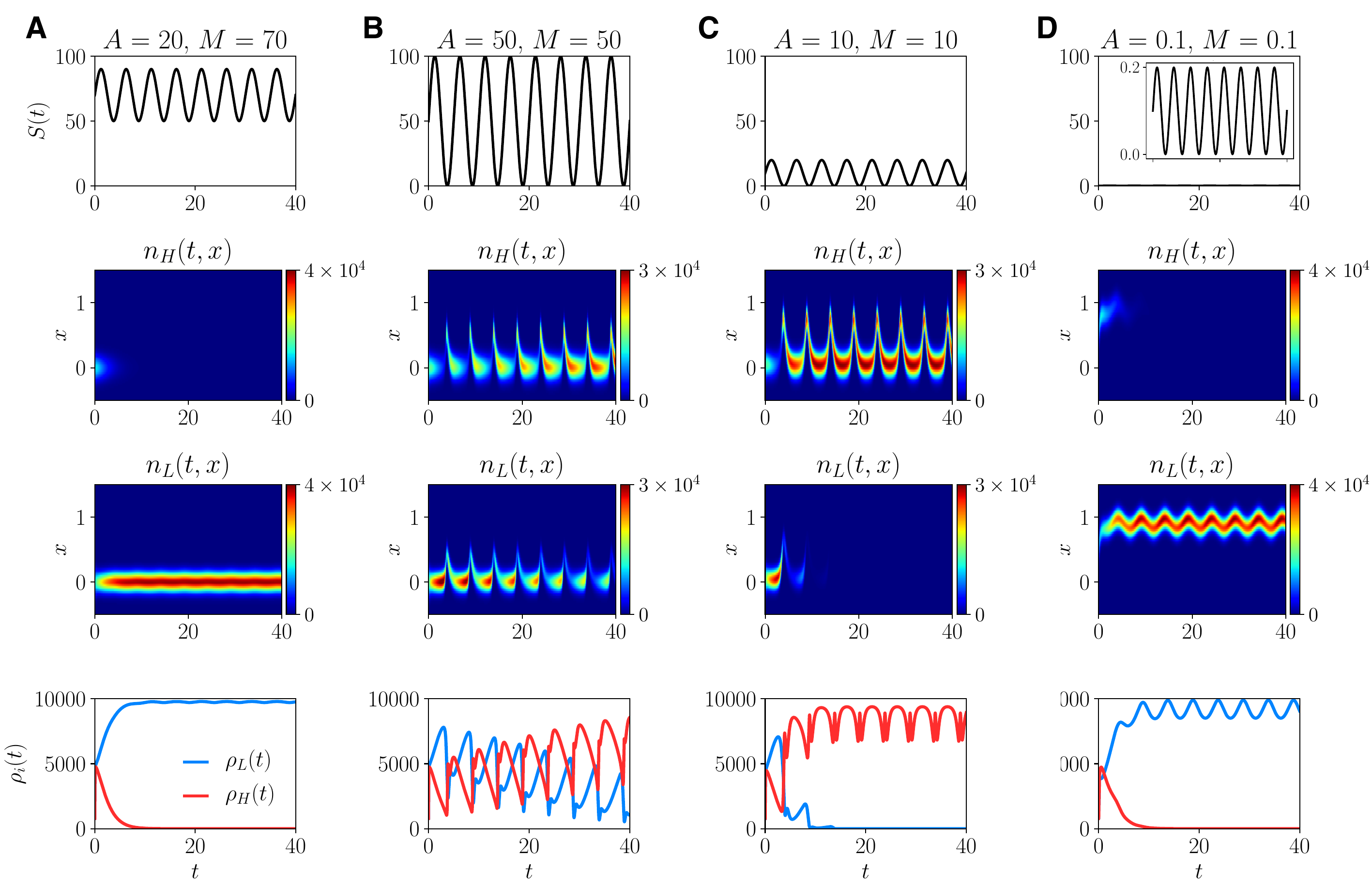}
\caption{\label{fig9} {\bf {\textsf A.}} Plots of the nutrient concentration $S(t)$ (first row), the phenotype distributions $n_H(x,t)$ (second row) and $n_L(x,t)$ (third row), and the population sizes $\rho_H(t)$ (fourth row, red line)  and $\rho_L(t)$ (fourth row, blue line) obtained by solving numerically the system of PDEs~\eqref{eS1}, where $S(t)$ is defined according to \eqref{sinS} with $T=5$, $A=20$ and $M=70$. The values of the model parameters are defined as in Table~\ref{table1} with $\beta_H = 0.025$. {\bf {\textsf B} - {\textsf D.}} Same as column {\bf {\textsf A}}  but for $A=M=50$ (column {\bf {\textsf B}}), $A=M=10$ (column {\bf {\textsf C}}), and $A=M=0.1$ (column {\bf {\textsf D}})}
\end{figure}

When the nutrient is abundant and experiences fluctuations of relatively low level (\emph{vid.} Figure~\ref{fig9}{\textsf A}) population $L$ outcompetes population $H$. This is due to the fact that, as shown by Figure~\ref{figS1}{\textsf A} in Appendix {\textsf A}, the fittest phenotypic state $\varphi(t)$ undergoes very small periodic oscillations and its value remains close to $0$ (\emph{i.e.} the value of the phenotypic variable $x$ corresponding to the fittest phenotypic state when nutrient is abundant). Moreover, the rescaled maximum fitness $g(t)$ undergoes very small periodic oscillations and its value remains close to $1$. 

When the nutrient level is uniformly low and undergoes relatively small oscillations (\emph{vid.} Figure~\ref{fig9}{\textsf D}) population $L$ is selected against population $H$. This is due to the fact that, as shown by Figure~\ref{figS1}{\textsf D} in Appendix {\textsf A}, both the fittest phenotypic state $\varphi(t)$ and the rescaled maximum fitness $g(t)$ undergo small periodic oscillations and their values remain close to $1$. We recall that $x=1$ is the value of the phenotypic variable corresponding to the fittest phenotypic state when nutrient is scarce.

For the cases when the populations experience fluctuations of relatively high level (\emph{vid.} Figure~\ref{fig9}{\textsf B} and Figure~\ref{fig9}{\textsf C}) population $H$ outcompetes population $L$. This is due to the fact that, as shown by Figure~\ref{figS1}{\textsf B} and Figure~\ref{figS1}{\textsf C} in Appendix {\textsf A}, the fittest phenotypic state $\varphi(t)$ fluctuates periodically between $1$ and a positive value close to $0$. Moreover, the rescaled maximum fitness $g(t)$ undergoes small periodic oscillations and its value remains close to $1$.

Note that in all cases the phenotype distribution of the surviving population remains unimodal with maximum at the mean phenotypic state. Ultimately, both the size and the mean phenotypic state of the surviving population oscillate periodically with the same period as the nutrient concentration $S(t)$. Furthermore, when the populations experience fluctuations of relatively high level (\emph{i.e.} Figure~\ref{fig9}{\textsf B} and Figure~\ref{fig9}{\textsf C}), the mean phenotypic state of the surviving population (\emph{i.e.} population $H$) undergoes rapid transitions between $1$ and a positive value close to $0$. This can be biologically seen as the emergence of a bet-hedging behaviour.

\section{Interpretation of the results in the context of cancer metabolism} \label{Section5}
The generality of the model and the robustness of the results make our conclusions applicable to a broad range of asexual populations evolving in fluctuating environments. As an example, in this section we discuss the biological implications of our mathematical results in the context of cancer cell metabolism and tumour-microenvironment interactions.

Cancers begin from single cells that grow to form organ-like masses within multicellular organisms. A fundamental property of cancer cells is a self-defined fitness function such that their proliferation is determined by their heritable phenotypic properties and the local environmental selection forces. That is, individual cancer cells have the capacity to evolve novel phenotypes and to adapt to the often harsh intratumoural environment. In contrast, while normal epithelial cells have the capacity to change their phenotype to some degree, e.g. they can only do so within normal physiological constraints in response to stress. In other words, because the proliferation of normal cells is entirely governed by local tissue constraints, they cannot, unlike cancer cells, evolve adaptations to many non-physiological conditions.

This difference in evolutionary capacity (or reaction norm) can confer a significant adaptive advantage to cancer cells. For example, cancer cells often metabolise glucose using glycolytic (converting glucose to lactic acid) pathways even when the oxygen concentration is sufficient for the aerobic mechanism (converting glucose to H$_2$O and CO$_2$). Known as the Warburg effect~\cite{warburg1925metabolism}, this can be understood in a Darwinian context as a form of niche construction because inefficiency of ATP (energy currency of cells) production is offset  by the evolutionary  advantage of generating a locally acidic environment. Cancer cells can evolve adaptive strategies to survive and proliferate in such an environment but normal cells cannot.

Angiogenesis is another form of niche construction as cancer cells, acting as a loosely organised group, produce and excrete pro-vascular proteins such as VEGF. Importantly, angiogenesis in cancers will occur entirely through these local interactions so that new vascular sprouts will emerge from the nearest vessel regardless of its flow capacity. Furthermore, cancer cells, once receiving blood flow, have no evolutionary imperative to invest resources in vascular maturation that permits blood flow to other regions of the tumour~\cite{gillies2018eco}. Such an unregulated vascular network will be highly unstable with cycles of growth and regression and blood flow dynamics that are inevitably disordered.

A number of studies have demonstrated that this disordered process of angiogenesis produces stochastic variations in blood flow leading to cycles of perfusion, cessation of flow, and then re-perfusion~\cite{kimura1996fluctuations}. This produces corresponding fluctuations in local environmental conditions that are dependent on blood flow, including oxygen and glucose, retention of metabolites such as acid, and important signalling molecules such as testosterone or oestrogen. Particularly, regions of normoxia (normal levels of oxygen), chronic hypoxia (low oxygen level) and cycling hypoxia have been distinguished in experimental and clinical studies~\cite{michiels2016cycling}. If we assume that $S(t)$ in our model represents the oxygen level at time $t$, then the phenotypic variants best adapted to oxygen-rich environments, and thus displaying a regular metabolism, are those with $x \rightarrow 0$, while the phenotypic variants best adapted to oxygen-low environments -- \emph{i.e.} the phenotypic variants that proliferate through the consumption of glucose, which is usually abundant -- correspond to $x \rightarrow 1$. For simplicity, we ignore any costs associated with the choice of metabolic preference, so that assumption~\eqref{asimp} is satisfied. This is illustrated by the schemes in Figure~\ref{figx}. 
\begin{figure}[h] \centering
\includegraphics[width=1 \linewidth]{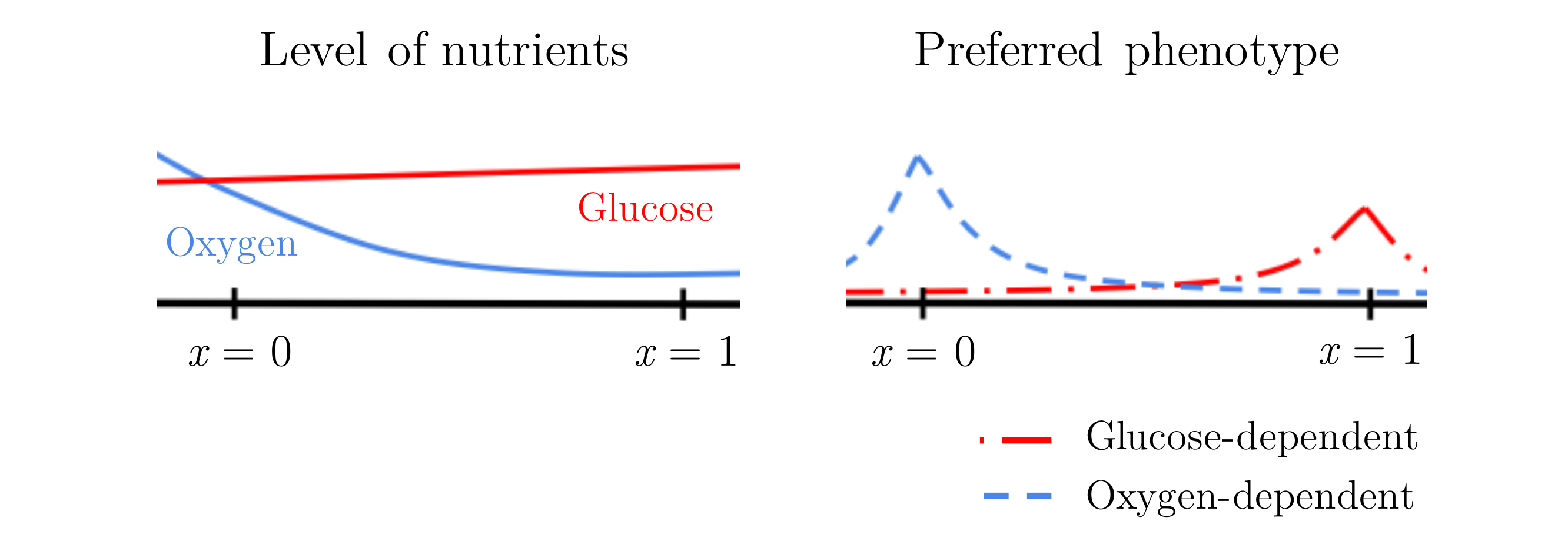}
\caption{\label{figx}Schematic interpretation of the phenotypic fitness landscape of our model in the context of cancer metabolism}
\end{figure}

Regions of normoxia and chronic hypoxia are analogous to cases {\textsf A} and {\textsf B} in Figure~\ref{fig4} where low levels of environmental variability are observed. Our results support the idea that these regions will be mainly populated by phenotypic variants best adapted to either oxygen-rich or oxygen-low environments. Moreover, since our results indicate that a higher rate of phenotypic variation does not constitute a competitive advantage in the presence of small environmental fluctuations, we expect relatively low levels of phenotypic heterogeneity to be observed in regions of either normoxia or chronic hypoxia.

On the contrary, regions of cyclic hypoxia are characterised by high variability in the oxygen levels. This can be related to case {\textsf C} in Figure~\ref{fig4} where, under nutrient fluctuations leading to drastic environmental changes, having a higher rate of phenotypic variation represents a competitive advantage, and the mean phenotype switches between the two extreme phenotypic states (\emph{i.e.} $x=0$ and $x=1$). Thus, in these regions we would expect to have higher levels of phenotypic heterogeneity, consistent with previous experimental findings~\cite{chen2018intermittent}, and to observe cells adopting bet-hedging as a survival strategy in response to drastic variation in oxygen levels. These conclusions are summarised in the table provided in Figure~\ref{summary}. 
\begin{figure}[h] \centering
\includegraphics[width=1 \linewidth]{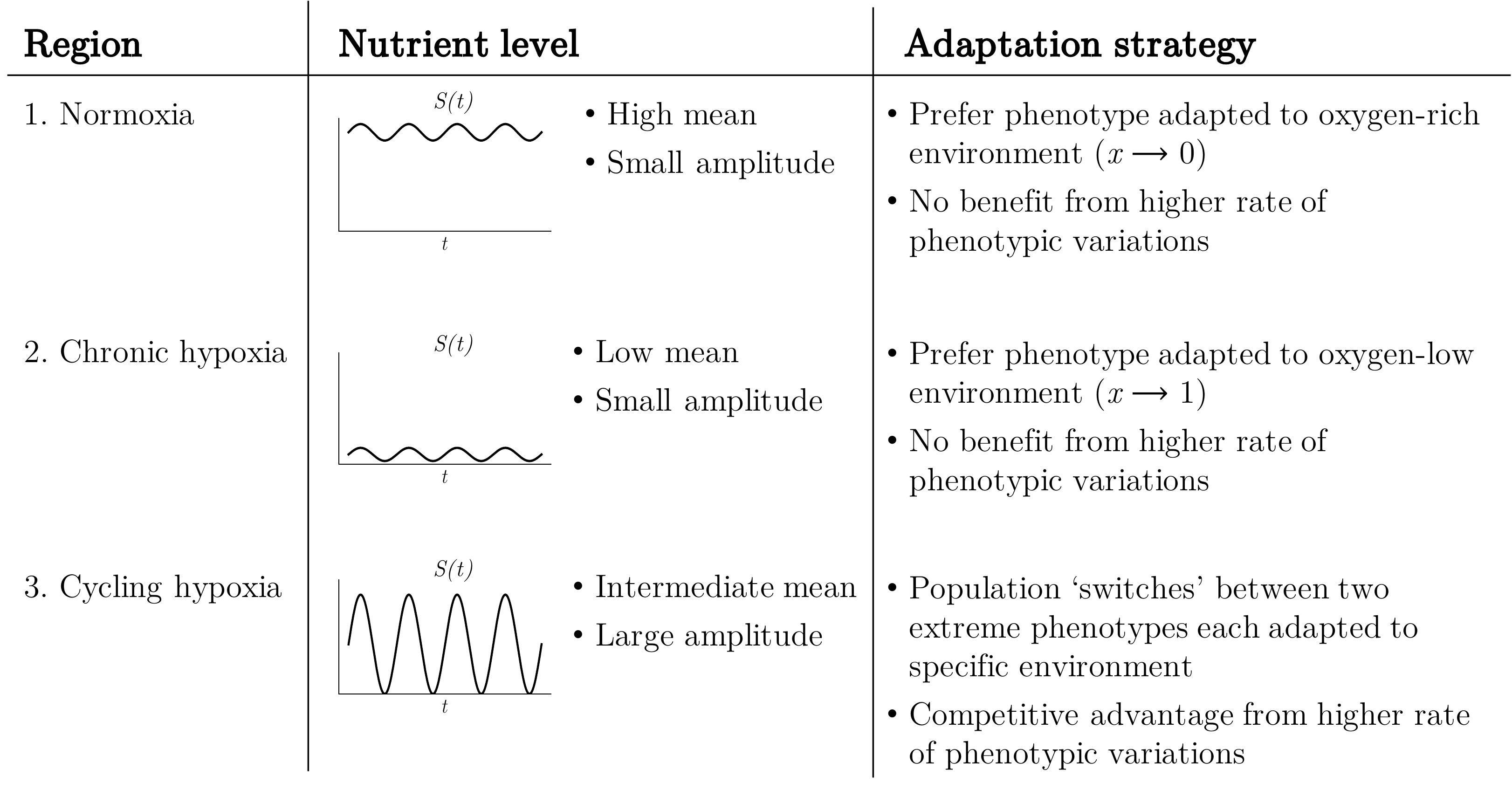}
\caption{\label{summary} Summary of the biological interpretation of our results in the context of cancer metabolism}
\end{figure}

\section{Conclusions}
\label{Conclusions}
We have presented a mathematical model for the evolutionary dynamics of two asexual phenotype-structured populations competing in periodically oscillating environments. The two populations undergo heritable, spontaneous phenotypic variations at different rates and their fitness landscape is dynamically sculpted by the occurrence of fluctuations in the concentration of a nutrient.

Our analytical results formalise the idea that when nutrient levels experience small and slow periodic oscillations, and thus environmental conditions are relatively stable, it is evolutionarily more efficient to rarely undergo spontaneous phenotypic variations. Conversely, under relatively large and fast periodic oscillations in the nutrient levels, which lead to alternating cycles of starvation and nutrient abundance, higher rates of spontaneous phenotypic variations can confer a competitive advantage, as they may allow for a quicker adaptation to changeable environmental conditions. In the latter case, our results indicate that higher levels of phenotypic heterogeneity are to be expected compared to those observed in slowly fluctuating environments. Finally, our results suggest that bet-hedging evolutionary strategies, whereby individuals switch between antithetical phenotypic states, can naturally emerge in the presence of relatively large and fast nutrient fluctuations leading to drastic environmental changes. 

We conclude with an outlook on possible extensions of the present work. The focus of this paper has been on the case where the maximum proliferation rate of the phenotypic variants best adapted to nutrient-rich environments (\emph{i.e.} the parameter $\gamma$) and the maximum proliferation rate of the phenotypic variants best adapted to nutrient-scarce environments (\emph{i.e.} the parameter $\zeta$) are the same. However, there are biological scenarios whereby the ability to survive in harsh environments comes with a fitness cost. For instance, cells are known to turn to glucose for their energy production when oxygen is in short supply, \emph{i.e.} they produce energy through anaerobic glycolysis instead of using oxidative phosphorylation that requires aerobic conditions. Since anaerobic glycolysis is far less efficient in terms of produced energy than oxidative phosphorylation~\cite{vander2009understanding}, the proliferation rate of glucose-dependent phenotypic variants might be lower than that of phenotypic variants best adapted to oxygenated environments. Therefore, it would be interesting to extend our analytical results to the case where $\zeta < \gamma$. It would also be interesting to consider cases where an integral operator is used, in place of a differential operator, to describe the effect of phenotypic variations. In this case, we expect the qualitative behaviour of the results presented here to remain unchanged when the transition between phenotypic states is modelled via Gaussian kernels of sufficiently small variance. From a mathematical point of view, this would require further development of the methods of proof presented here in order to carry out a similar asymptotic  analysis of evolutionary dynamics.

Another natural extension of the model is consideration of the feedback from the populations on the nutrient level. In fact, most existing models of evolutionary dynamics in a fluctuating environment do not account for this feedback and potentially nonlinear dynamical interactions between individuals and the surrounding environment. However, changes in the population dynamics are known to affect the outcome of interspecies competition in the presence of time variations in the availability of  nutrients \cite{okuyama2015demographic}. For instance, in the context of solid tumours, one could consider both negative feedback mechanisms that regulate population growth, such as nutrient consumption, and positive feedback mechanisms that promote the supply of nutrient, such as angiogenesis, \emph{i.e.} hypoxia-induced formation of blood vessels. Consideration of these mechanisms is expected to affect the advantages gained by each population, and therefore the phenotypic composition of the tumour, in a given environment.

An additional development of our study would be to incorporate into the model a spatial structure, as done for instance in~\cite{bouin2014travelling,lorenzi2018role,lorz2015modeling,alfaro2013travelling,mirrahimi2015asymptotic,lam2014evolution}, and let multiple nutrient sources with different inflows be distributed across the spatial domain. This would lead individuals to experience nutrient fluctuations of variable amplitudes and frequencies depending on their spatial position, thus leading to the emergence of multiple local niches whereby different phenotypic variants could be selected. Such an extension of our study would be relevant in several biological contexts. In fact, spatial niche partitioning is known to have an important impact on interspecies competition, as it promotes the coexistence between species best adapted to different local environmental conditions~\cite{hassell1994species,kneitel2004trade}. Moreover, in the context of cancer research, clinical images and histological data have revealed the existence of considerable levels of spatial heterogeneity in oxygen distribution within tumours~\cite{tomaszewski2017oxygen,michiels2016cycling}, and localised regions of cycling hypoxia and chronic hypoxia have been identified in tumour xenografts~\cite{matsuo2014magnetic}. In this regard, a spatially-stuctured version of our model could shed new light on the ways in which the interplay between spatial and temporal variability of oxygen levels may dictate the phenotypic composition and the level of phenotypic heterogeneity of malignant tumours.

\begin{acknowledgements}
AA is supported by funding from the Engineering and Physical Sciences Research Council (EPSRC) and the Medical Research Council (MRC) (grant no. EP/L016044/1) and in part by the Moffitt Cancer Center PSOC, NIH/NCI (grant no. U54CA193489). RG and ARAA are supported by Physical Sciences Oncology Network (PSON) grant from the National Cancer Institute (grant no. U54CA193489) as well as the Cancer Systems Biology Consortium grant from the National Cancer Institute (grant no. U01CA23238). ARAA and RG would also like to acknowledge support from the Moffitt Cancer Center of Excellence for Evolutionary Therapy.
\end{acknowledgements}

\bibliographystyle{spphys}       
\bibliography{Manuscript}   

\newpage
\appendix
\section{Supplementary figures}
\begin{figure}[h!] \centering
\includegraphics[width=1 \linewidth]{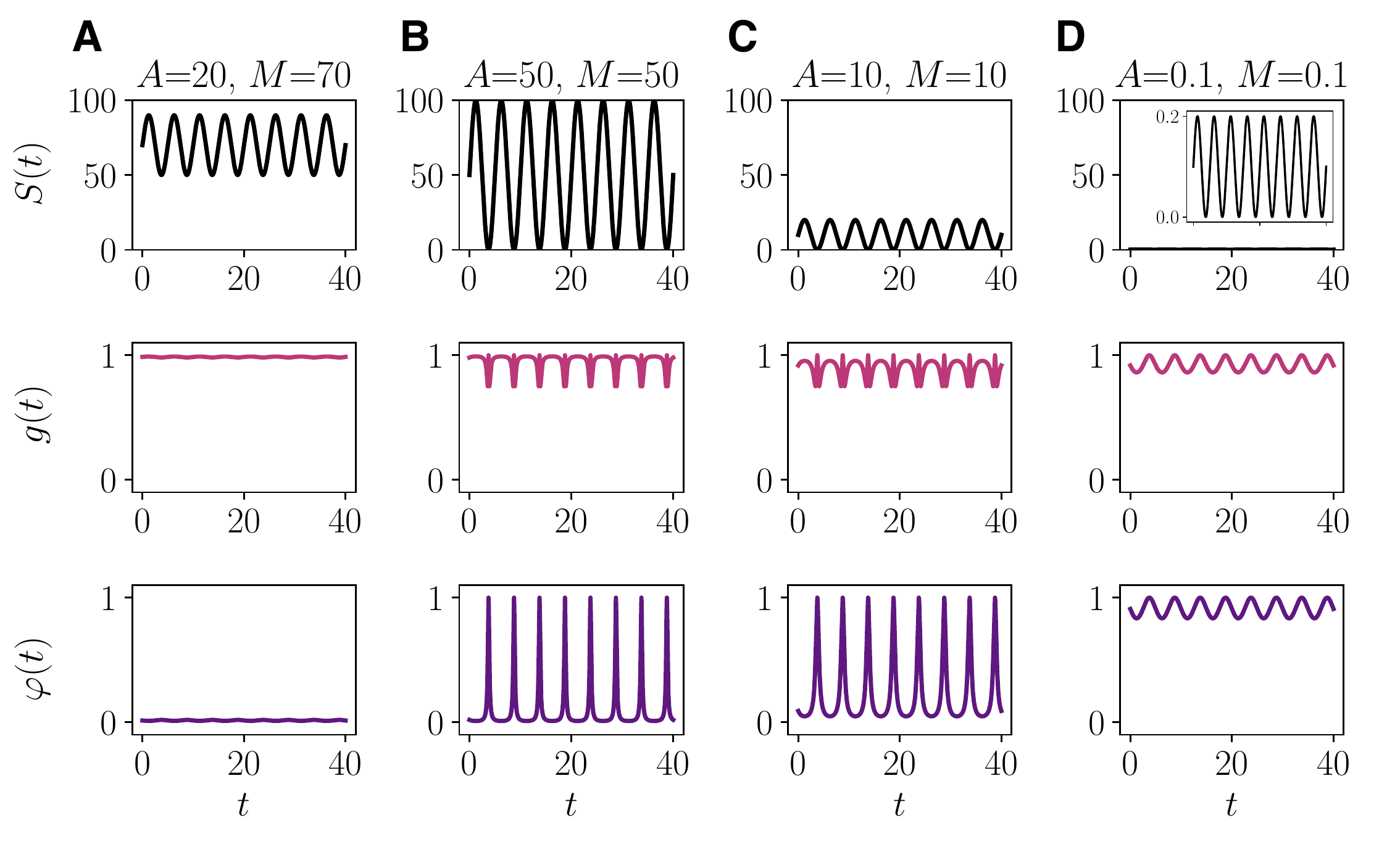}
\caption{\label{figS1} {\bf {\textsf A.}} Plots of the nutrient concentration $S(t)$ (top row) defined according to~\eqref{sinS} with $M=70$, $A=20$ and $T=5$, and the corresponding rescaled maximum fitness $g(t)$ (central row) and fittest phenotypic state $\varphi(t)$ (bottom row) defined according to~\eqref{gh}. {\bf {\textsf B} - {\textsf D.}} Same as column {\bf {\textsf A}} but with $A=M=50$ (column {\bf {\textsf B}}), $A=M=10$ (column {\bf {\textsf C}}), and $A=M=0.1$ (column {\bf {\textsf D}})}
\end{figure}

\end{document}